\documentclass[12pt]{article}
\usepackage{amsfonts,epsfig,latexsym,amscd,amsmath,theorem,mathrsfs}

\textheight 22.5cm\topmargin -0.4 in
\textwidth 16.8cm\oddsidemargin 0in\evensidemargin 0in

%
%

\newcommand{\sdiff}{{\rm SDiff}}

\newcommand{\xvec}{{\bf x}}

\newcommand{\g}{\mathfrak{g}}

\newcommand{\R}{{\mathbb{R}}}
\newcommand{\Z}{{\mathbb{Z}}}

\newcommand{\I}{{\mathbb{I}}}

\newcommand{\beq}{\begin{equation}}
\newcommand{\eeq}{\end{equation}}
\newcommand{\bea}{\begin{eqnarray}}
\newcommand{\eea}{\end{eqnarray}}
\newcommand{\ben}{\begin{eqnarray*}}
\newcommand{\een}{\end{eqnarray*}}
\newcommand{\bem}{\begin{enumerate}}
\newcommand{\eem}{\end{enumerate}}
\newcommand{\ii}{\item}
\newcommand{\ra}{\rightarrow}

\newcommand{\cd}{\partial}
\newcommand{\wt}{\widetilde}

\newcommand{\less}{\backslash}

\newcommand{\hess}{{\sf Hess}}

\newcommand{\su}{{\mathfrak{su}}}

\newcommand{\ph}{{\varphi}}
\def \d{\mathrm{d}}
\newcommand{\dstar}{\delta}
\newcommand{\ip}[1]{\langle #1 \rangle}
\newcommand{\ignore}[1]{}

\newcommand{\RR}{\mathscr{R}}

\renewcommand{\L}{\mathscr{L}}

\newcommand{\vol}{{\rm vol}}

\newcommand{\diag}{{\rm diag}}

\newcommand{\tauvec}{\mbox{\boldmath{$\tau$}}}

\newcommand{\tr}{{\rm tr}\, }

\newcommand{\id}{{\rm id}}

\newcommand{\eps}{\varepsilon}

\renewcommand{\div}{{\rm div}\, }
\renewcommand{\ph}{\varphi}

\theoremstyle{plain}
\newtheorem{thm}{Theorem}
\newtheorem{lemma}[thm]{Lemma}
\newtheorem{prop}[thm]{Proposition}
\newtheorem{cor}[thm]{Corollary}

{\theorembodyfont{\rmfamily}
\newtheorem{defn}[thm]{Definition}
\newtheorem{remark}[thm]{Remark}
\newtheorem{rem}[thm]{Remark}
\newtheorem{eg}[thm]{Example}

}

\newcommand{\news}{\setcounter{equation}{0}}
\newenvironment{proof}{\noindent{\it Proof:\, }}{\hfill$\Box$\vspace*{0.5cm}
}

\renewcommand{\theequation}{\thesection.\arabic{equation}}

\begin{document}

\title{Near BPS Skyrmions and Restricted Harmonic Maps}
\author{
J.M. Speight\thanks{E-mail: {\tt speight@maths.leeds.ac.uk}}\\
School of Mathematics, University of Leeds\\
Leeds LS2 9JT, England}

\maketitle

\begin{abstract}
Motivated by a class of near BPS Skyrme models introduced by Adam, S\'anchez-Guill\'en and Wereszczy\'nski, 
the following variant of the
harmonic map problem is introduced: a map $\ph:(M,g)\ra (N,h)$ between Riemannian manifolds is {\em restricted harmonic} if it locally extremizes 
$E_2$ on its $\sdiff(M)$
orbit, where $\sdiff(M)$ denotes the group of volume preserving diffeomorphisms of $(M,g)$, and $E_2$ denotes the Dirichlet energy. It is conjectured that
near BPS skyrmions tend to restricted harmonic maps in the BPS limit.
It is shown that $\ph$ is restricted harmonic if and only if $\ph^*h$ has exact divergence, and a linear stability theory of restricted harmonic maps
is developed, from which it follows that all weakly conformal maps are stable restricted harmonic. 
Examples of restricted harmonic maps in every degree class
$\R^3\ra SU(2)$ and $\R^2\ra S^2$ are constructed. It is shown that the axially symmetric BPS skyrmions on which all previous analytic studies of near 
BPS Skyrme models have been based, are {\em not}
restricted harmonic, casting doubt on the 
phenomenological predictions of such studies. 
The problem of minimizing $E_2$ for $\ph:\R^k\ra N$
over all {\em linear} volume preserving diffeomorphisms is solved explicitly, and a deformed axially symmetric family of Skyrme fields constructed which
 are candidates for approximate near BPS skyrmions at low baryon number. The notion of restricted
harmonicity is generalized to restricted $F$-criticality where $F$ is any functional on maps $(M,g)\ra (N,h)$ which is, in a precise sense, geometrically
natural. The case where
$F$ is a linear combination of $E_2$ and $E_4$, the usual Skyrme term, is studied in detail, and it is shown that inverse stereographic projection
$\R^3\ra S^3\equiv SU(2)$ is stable restricted $F$-critical for every such $F$.

\end{abstract}

\maketitle

\section{Introduction}

The Skyrme model is an effective theory of nuclear physics in which atomic
nuclei are modelled by topological solitons. It has a single field
$\ph:\R^3\ra SU(2)$ which, by virtue of the boundary condition
$\ph(\infty)=\I_2$, is classified topologically by its degree
$B$, an integer interpreted physically as baryon number. The solitons,
called skyrmions, are the global minimizers, in their degree class, of
an energy functional which, in the standard version of the model, takes the
form
\beq
E(\ph)=\frac12\int_{\R^3}(|\d\ph|^2+|\ph^*\d\mu|^2),
\eeq
where $\mu$ is the left Maurer-Cartan form on the Lie group $SU(2)$. There
is a topological lower energy bound due to Faddeev \cite{fad-bound}
\beq\label{fadbound}
E(\ph)\geq C|B|,
\eeq
but it is known that the bound is never attained \cite{man-skyrme}. Numerical
studies suggest that each degree class has an energy minimizer $\ph_B$, 
and that $E(\ph_B)/B$ is a monotonically decreasing function of $B$. We may
regard $BE(\ph_1)-E(\ph_B)$ as the classical binding energy of a charge
$B$ nucleus, that is, the energy required to break the nucleus into $B$
well-separated individual nucleons. Normalizing this quantity by $E(\ph_1)$,
the rest energy of a single nucleon, one finds that the binding energies
predicted by the standard Skyrme model are much larger than those found
for real nuclei, typically by a factor of around 15 (see table \ref{table1}). This
estimate of nuclear binding energies is, admittedly, rather crude: a more 
refined prediction of nuclear masses requires one
to perform a rather elaborate 
semi-classical quantization of the model. But it seems implausible
that quantum effects will correct such a large discrepancy in the classical
model. 

\begin{table}
\begin{center}
\begin{tabular}{cccc}
$B$&Element&B.E. (Skyrme)&B.E. (experiment)\\ \hline
4&He&0.3639&0.0301\\
7&Li&0.7811&0.0414\\
9&Be&1.0123&0.0615\\
11&B&1.2792&0.0807\\
12&C&1.4277&0.0981\\
14&N&1.6815&0.1114\\
16&O&1.9646&0.1359\\
19&F&2.3684&0.1570\\
20&Ne&2.5045&0.1710\\
\end{tabular}
\end{center}
\label{table1}
\caption{Classical binding energies in the standard Skyrme model, for the first 9 stable composite nuclei. Column 3 shows $(BE_1-E_B)/E_1$, the classical binding 
energy in units of the nucleon mass, computed using data from \cite[p.\ 377]{mansut}. Column 4 shows the same quantity computed from experimental data
\cite{audwap}.}
\end{table}

This has led to considerable recent interest in so-called near-BPS Skyrme
models. The idea is to start with a ``BPS'' Skyrme model, that is, a
Skyrme type model with a linear topological bound like (\ref{fadbound})
which is attained in each degree class. Such a model has exactly zero
(classical) nuclear binding energies. One then perturbs this model in
some way, to obtain a near-BPS model with small but positive
binding energies, in better agreement with nature. One proposal of this type,
due to Sutcliffe \cite{sut-holographic}, starts with pure Yang-Mills
theory on $\R^4$ as the BPS model, reinterprets it as a holographic Skyrme
model on $\R^3$ coupled to an infinite tower of vector mesons, and perturbs
it, to generate a near-BPS model, by truncating the meson tower. This proposal
has many attractive features, not least in providing a satisfying 
explanation for the curious link between skyrmions and instanton holonomies
observed by Atiyah and Manton \cite{atiman}. One disadvantage is that, even
at the lowest truncation level, the model is formidably difficult to simulate
numerically. 

In this paper, we consider a much more direct proposal, due
to Adam, S\'anchez-Guill\'en and Wereszczy\'nski (henceforth ASW)
\cite{adasanwer}. Their idea is to consider an extended Skyrme model
which includes both potential and sextic terms in its energy
\beq
E(\ph)=\frac12\int_{\R^3}\left(c_0U(\ph)^2+c_2|\d\ph|^2
+c_4|\ph^*\d\mu|^2+c_6|\ph^*\vol_{SU(2)}|^2\right)
=:c_0E_0+c_2E_2+c_4E_4+c_6E_6
\eeq
where $c_0,\ldots,c_6\geq 0$ are constants, $U:SU(2)\ra [0,\infty)$
is (the square root of) some potential, and $\vol_{SU(2)}$ denotes the
volume form on $SU(2)$. The standard Skyrme model has $c_0=c_6=0$, and the key
observation here is that the complementary model, with $c_2=c_4=0$, is BPS
\cite{artroctchyan}.
It is convenient to choose length and energy units so that $c_0=c_6=1$. Then,
fixing $c_2=c_4=0$, one finds that
\beq\label{bogbound}
E_{BPS}(\ph)=E_0(\ph)+E_6(\ph)\geq\frac{\ip{U}}{2\pi^2}B
\eeq
with equality if and only if 
\beq
\ph^*\vol_{SU(2)}=U\circ\ph,
\eeq
and fields satisfying this equation (perhaps with rather low regularity)
can be constructed in each degree class. (The constant $\ip{U}$ in equation
(\ref{bogbound}) is the average value of the function $U:SU(2)\ra [0,\infty)$
\cite{spe-semicompactons}.) Note that, given any volume preserving 
diffeomorphism $\psi:\R^3\ra\R^3$, $E_{BPS}(\ph\circ\psi)\equiv
E_{BPS}(\ph)$. Hence, the BPS model also has the
attractive feature of being invariant under the natural action of $\sdiff(\R^3)$, the group of volume preserving diffeomorphisms 
of physical space $\R^3$, so that its skyrmions
can be plastically deformed without changing their energy. This is
reminiscent of the liquid drop model of nuclei.

 Clearly the BPS model by itself is completely unphysical: its
Lorentz invariant extension to Minkowski space has a pathological field equation which does not uniquely define a time evolution of the
field even locally. To get something reasonable, one must at least take $c_2>0$ (whether $c_4>0$ also is a matter of taste). If we assume that
$c_2,c_4$ remain small, however, we should obtain a physically sensible {\em near} BPS model, whose Skyrmions have small binding energy, and
are relatively insensitive to plastic deformations. 

This proposal has been analyzed in detail in a sequence of papers by ASW and collaborators \cite{adasanwer,adasanwer2,adanaysanwer}. 
In \cite{adasanwer2}, the BPS model with the usual pion mass potential 
 is treated, a sequence of axially symmetric BPS skyrmions constructed and a rigid body semi-classical
quantization of these performed to obtain phenomenological predictions of nuclear masses and radii. 
In \cite{adanaysanwer} the rigid body quantization is improved for $B=1$ by including
a dynamical dilation mode, allowing the prediction of so-called Roper resonances. These papers leave $c_2=0$, which is surely
unphysical, and it is not straightforward to translate their results to the case $c_2>0$, small, because, with 
this choice of potential, the BPS skyrmions have infinite $E_2$.  Furthermore, any choice of potential which gives the pions nonzero mass
has the problematic property that the pion mass scales
like $1/\sqrt c_2$, so that, in the near BPS regime, pions are heavier than nucleons. 

These problems were addressed in a pair 
of papers by Marleau and collaborators \cite{bonmar,bonharmar}. In \cite{bonmar} the model with a certain massless potential
is studied. A sequence of axially symmetric charge $B$ BPS skyrmions is constructed which have finite $E_2$ and $E_4$.
 These 
BPS skyrmions are 
 rigid-body quantized within the near BPS model, with $c_2,c_4$ small but nonzero, binding energy curves extracted and $c_2,c_4$
fitted against experimental data. Remarkably, a best fit with $c_4<0$ is proposed, although the model with $c_4<0$ 
has energy unbounded below, and so is unphysical. 
(To see this, note that any degree $0$ field taking values in a two-dimensional submanifold of $SU(2)$
has $E_6\equiv 0$, but $E_4>0$, so generates a family of fields with energy unbounded below under Derrick scaling \cite{der}. Hence $E$ is unbounded below 
in the
$B=0$ sector. Unboundedness in every other sector follows from an obvious gluing construction.
This point  was
also missed in a recent paper by Gudnason and Nitta which, likewise, studies skyrmions with $c_4<0$ \cite{gudnit}.)
 In \cite{bonharmar} a similar analysis is performed, the potential having been tweaked to produce BPS skyrmions with non-shell-like baryon density.
 Once again, best fits with $c_4<0$ are proposed, which is, perhaps, best 
interpreted as suggesting that near BPS Skyrme models with $c_4=0$ are phenomenologically favoured. 

All these 
papers rest on the assumption that in the (physically reasonable) near BPS model, with $c_2>0$ but small, the degree $B$ energy minimizer is well approximated
by the particular axially symmetric BPS skyrmion 
\beq
\ph_B(r,\theta,\phi)=\cos f(B^{-1/3}r)\I_2+i\sin f(B^{-1/3}r)(\sin\theta\cos B\phi,\sin\theta\sin B\phi,\cos\theta)\cdot\tauvec
\eeq
where $f$ is a profile function determined by the potential $U$ and $(r,\theta,\phi)$ are the usual spherical polar coordinates on $\R^3$.
Certainly, it is reasonable to assume that the near BPS skyrmion will be close to {\em some} minimizer of $E_{BPS}$, but
why should it be $\ph_B$? Recall that BPS skyrmions come in infinite dimensional families since, if $\ph$ minimizes $E_{BPS}$, so does every
field in its $\sdiff(\R^3)$ orbit. Consider, for the moment, the case where $c_4$ remains zero. 
Which minimizer $\ph$ of $E_{BPS}$ should we choose to approximate the minimizer of $E_{BPS}+c_2E_2$, where $c_2>0$ is small? Clearly
 $\ph$ should {\em minimize $E_2$ within the space of all degree $B$ minimizers of $E_{BPS}$}. In particular,
$\ph$ should minimize $E_2$ within all fields in its $\sdiff(\R^3)$ orbit. It is not hard to show that, for $|B|>1$, the axially symmetric BPS 
skyrmions $\ph_B$ used in \cite{adasanwer,adasanwer2,adanaysanwer,bonharmar,bonmar} do not have this property, and that their failure to minimize $E_2$
gets worse as $B$ grows.
Recall that a function $\ph:(M,g)\ra (N,h)$ between Riemannian manifolds which locally extremizes the Dirichlet energy $E_2$ 
with respect to {\em all} smooth variations is called a
{\em harmonic map}. What we seek is a map $\ph$ from $\R^3$ to $SU(2)=S^3$, given their usual metrics, which locally extremizes (in fact,
minimizes) $E_2$ not with respect to {\em all} smooth variations, but only with respect to variations arising from volume preserving diffeomorphisms of $(M,g)$.
We say that such a map is {\em restricted harmonic}. 

This paper presents a systematic study of the restricted harmonic map problem in the general setting, before specializing to the case of main interest,
$M=\R^3$, $N=SU(2)=S^3$ with their canonical metrics. It is shown that a map $\ph:(M,g)\ra (N,h)$ is restricted harmonic if and only if $\div\ph^*h$ 
(a one-form on $M$) is exact. The second variation formula for $E_2$ at a restricted harmonic map is derived, yielding a symmetric bilinear form (the hessian)
on the space of divergenceless vector fields on $(M,g)$. A restricted harmonic map is stable if this symmetric bilinear form  is non-negative. It follows
immediately from these formulae that every weakly conformal map is restricted harmonic, stable, and, in fact, locally minimizes $E_2$ on its $\sdiff$ orbit.
For example, inverse stereographic projection $\R^3\ra S^3$ is a stable restricted harmonic skyrme field. We observe that it is also
a BPS skyrmion for an appropriate choice of potential. In fact, every hedgehog skyrme field is restricted harmonic (though stability is an open 
question).
By contrast, the axially symmetric BPS skyrmions $\ph_B$ used in 
\cite{adasanwer,adasanwer2,adanaysanwer,bonmar,bonharmar} are {\em never}
restricted harmonic, for $|B|>1$, so there certainly exist fields in their $\sdiff$ orbits with lower $E_2$, which better approximate near-BPS
skyrmions in the model $E_{BPS}+c_2 E_2$. Another natural family of skyrme fields, those within the rational map ansatz,
can also be shown to have no restricted harmonic members with $|B|>1$.

Constructing the actual $E_2$ minimizer in a given map's $\sdiff$ orbit is a highly nontrivial problem (if, indeed, such a minimizer exists), 
to which we can offer only partial
solutions. First, one can extract from the first variation formula for $E_2$ the direction of steepest descent for $E_2$ tangent to the
$\sdiff$ orbit of $\ph$. This is a divergenceless vector field on $(M,g)$ flow along which, at least initially, improves $\ph$ fastest. Second, 
in the case $M=\R^k$, if one
is (much) less ambitious, and seeks to minimize $E_2$ only over the orbit of the
finite dimensional subgroup of $\sdiff(\R^k)$ consisting of {\em linear} volume preserving 
diffeomorphisms, $SL(k,\R)$, the problem has an easy and neat explicit solution. This allows us to construct a better sequence of maps $\ph_B'=\ph_B\circ A_B$
by linearly deforming those used in previous analytic studies
\cite{adasanwer,adasanwer2,adanaysanwer,bonmar,bonharmar}. These are still not restricted harmonic, but they have much lower $E_2$ than $\ph_B$, particularly at
large $|B|$ ($E_2(\ph_B')\sim |B|^{\frac53}$ whereas $E_2(\ph_B)\sim |B|^{\frac73}$). 

Restricted harmonicity is relevant to near BPS skyrme models with energy $E=E_{BPS}+\eps F$, where the perturbation is
$F=E_2$. More generally, if we consider the model with perturbation
$$
F_\alpha = \alpha E_2+(1-\alpha)E_4,\qquad 0\leq\alpha\leq 1,
$$
then a BPS skyrmion is a sensible approximant to a skyrmion in the perturbed model only if it minimizes 
$F_{\alpha}$
among all maps in its $\sdiff$ orbit. Note that this is just (up to scales) the conventional skyrme energy, which, like $E_2$, has a natural generalization
to the case $\ph:(M,g)\ra (N,h)$ for arbitrary domain and target space \cite{man-skyrme}. By analogy with the harmonic case, $\alpha=1$, we can
define restricted $F$-critical maps (those which locally extremize $F$ on their $\sdiff$ orbit), and derive a linear stability criterion for these, for 
any geometrically natural energy functional $F(\ph)$ (for a precise
definition of ``geometrically natural'' see section \ref{sec-rhm}). The analysis of the case $F=E_2$ generalizes immediately:  $\ph$ is restricted
$F$-critical if and only if $\div S_F$ is exact, where $S_F$ is the stress tensor defined by $F$, and one can find a formula for the
hessian about a restricted $F$-critical map. We apply these formulae in the extreme case $F=F_0=E_4$, to show that inverse stereographic projection is
a stable restricted $F_\alpha$-critical map for all $\alpha\in[0,1]$.

The rest of this paper is structured as follows.  In section
\ref{sec-rhm} restricted harmonic maps are defined in a general geometric setting, and the first variation formula obtained. It is shown that
all hedgehog Skyrme fields are restricted harmonic, that no other fields in the rational map ansatz are, and that all weakly conformal maps are
restricted harmonic. It is also shown that all axially symmetric baby Skyrme fields are restricted harmonic. 
In section \ref{sec-second} the second variation formula is obtained, and it is shown that all weakly conformal maps are
{\em stable}
restricted harmonic.
In section \ref{sec-gen}, restricted $F$-critical
maps are studied for an arbitrary geometrically natural functional $F$, and the case $F=E_4$ (just the Skyrme term) is analyzed in detail. 
Finally, in section \ref{sec-finite}, we solve the finite-dimensional analogue of the restricted harmonic map problem for
maps $\R^k\ra N$, that is, the problem of minimizing $E_2$ over the $SL(k,\R)$ orbit of a given map $\ph$. This produces an
improved family of axially symmetric BPS skyrmions which may be of phenomenological interest.

\section{Restricted harmonic maps}
\news
\label{sec:RHM}
\label{sec-rhm}

Given a minimizer of $E_{BPS}$, we wish to determine whether it minimizes $E_2$ over its orbit under the group of volume preserving diffeomorphisms of physical
space. This is a natural, and apparently novel, variational problem, which makes sense for any smooth map between Riemannian manifolds (an analogous problem
for the Maxwell energy of a magnetic field was considered in \cite{frehe}). In this section we present  a systematic study of this problem in the general
geometric context, before specializing to the original motivating case of BPS skyrmions.

Let $(M,g)$ and $(N,h)$  denote oriented
 Riemannian manifolds of arbitrary dimensions $m$ and 
$n$ respectively, and $\{e_1,\ldots,e_m\}$  be a local orthonormal frame of 
vector fields on $M$. 
Given a vector bundle ${\mathsf E}$ over $M$, $\Gamma({\mathsf E})$ will denote the vector space of smooth sections of ${\mathsf E}$, and
$\odot$ will denote
symmetrized tensor product. $\Omega^p(M)$ will denote the set of smooth $p$-forms on $M$, and $\dstar:\Omega^{p}(M)\ra\Omega^{p-1}(M)$ will be the coderivative, 
that is, the formal $L^2$ adjoint of $\d$
(explicitly, $\dstar=(-1)^{mp+m+1}*\d*$).
The metric $g$ defines canonical isomorphisms between all tensor bundles $T^p_qM$ with the same $p+q$. We will denote by $\flat$ the isomorphism
$T^p_0M\ra T^0_pM$ and by $\sharp$ its inverse. To deal economically with the technicalities arising when $M$ is noncompact, we 
define $\sdiff(M,g)$ to be the space of volume preserving diffeomorphisms of $M$ with compact support (where
the support of a diffeomorphism $\psi:M\ra M$ is the closure of the set $\{x\in M\: :\: \psi(x)\neq x\}$). Note that the formal
tangent space to $\sdiff(M,g)$ at $\id_M$ is the space of smooth divergenceless vector fields of compact support, which we will 
denote $\Gamma_0(TM)$. 
The main definition we want to introduce is the following:

\begin{defn} A smooth map $\ph:(M,g)\ra (N,h)$ is {\em restricted harmonic} if $E_2(\ph)$ is finite and, for all smooth curves $\psi_t$ in $\sdiff(M,g)$ through
$\psi_0=\id_M$,
$$
\frac{d\: }{dt}\bigg|_{t=0}E_2(\ph\circ\psi_t)=0.
$$
\end{defn}

\begin{remark} 
It is clear that, if $E_2(\ph)$ is finite, then so is $E_2(\ph\circ\psi)$ for all $\psi\in\sdiff(M,g)$ since $\ph\circ\psi$ coincides with $\ph$ outside a 
compact set.
Since $\psi_0=\id_M$ we can, without loss of generality, consider only variation curves in $\sdiff_0(M,g)$, the identity component of
$\sdiff(M,g)$. We can then paraphrase the definition as follows: let $E_\ph:\sdiff_0(M,g)\ra\R$ be the function $E_\ph(\psi)=E_2(\ph\circ\psi)$. Then
$\ph$ is restricted harmonic if $\id_M$ is a critical point of $E_\ph$.
The conditions that $E_2(\ph)$ is finite, and that the diffeomorphisms have compact support, are redundant when $(M,g)$ is compact.
\end{remark}

\begin{remark}\label{soisanagi} Clearly $\ph$ harmonic implies $\ph$ restricted harmonic (since $\ph$ is then a critical point of $E_2$ with respect to {\em all}
variations of compact support),
but the converse is false. For example, if $(M,g)=S^1=\R/\Z$ with the
usual metric, then $\vol_g=dx$ and $\sdiff_0$ consists only of the
translation maps $\psi(x)=x+a$. But such maps are isometries, so do not
change $E_2(\ph)$, for any map $\ph:S^1\ra (N,h)$, to any target space 
$(N,h)$. Hence every closed parametrized curve $\ph:S^1\ra(N,h)$ is restricted 
harmonic, whereas only closed {\em geodesics} are harmonic.
\end{remark}

Our first task is to compute the first variation formula associated with
this variational problem. The following definition \cite{spe-crystals} turns out to be
useful for this purpose.

\begin{defn}\label{netbskir} A functional $E(\ph,g)$, which maps each pair consisting of a
smooth map $\ph:M\ra N$ and a Riemannian metric $g$ on $M$ to some
real number, is {\em geometrically natural} if, for all smooth
maps $\ph:M\ra N$, all Riemannian metrics $g$ and all diffeomorphisms
$\psi:M\ra M$,
$$
E(\ph\circ\psi,g)=E(\ph,(\psi^{-1})^*g).
$$
\end{defn}

\begin{remark} In local coordinates on $M$, we can think of a diffeomorphism as a ``passive transformation'', that is, a change of local coordinates. 
Being geometrically natural then reduces to the condition that $E(\ph,g)$ is independent of the choice of local coordinates on $M$. It follows
that all the energy functionals of interest in this paper, $E_0$, $E_2$, $E_4$ and $E_6$, are geometrically natural.
\end{remark}

\begin{remark} For a geometrically natural functional $E(\ph,g)$, a  variation of $\ph$ through diffeomorphisms with $g$ fixed can be reinterpreted as
a variation of the metric $g$ through pullback with $\ph$ fixed. Hence we are led to consider the variation of $E_2(\ph,g)$ with respect to $g$, as
well as $\ph$, and this is encapsulated by the functional's {\em stress tensor}.
\end{remark}

\begin{defn} Let $E(\ph,g)$ be a functional on the space of smooth
maps $\ph:(M,g)\ra N$.
 Let $g_t$ be a smooth curve in the space of Riemannian
metrics on $M$ with $g_0=g$. Let $\eps=\cd_t|_{t=0}g_t$. Note that $\eps$,
like $g$, is a symmetric $(0,2)$ tensor on $M$. The {\em stress tensor}
of $(\ph,g)$ with respect to the functional $E$, is the unique
symmetric $(0,2)$ tensor $S_E(\ph,g)$ on $M$ such that
$$
\frac{d\:}{dt}\bigg|_{t=0}E(\ph,g_t)=\frac12\ip{S_E(\ph,g),\eps}_{L^2}
=\frac12\int_M\ip{S_E(\ph,g),\eps}_g\vol_g.
$$
In the above, we are using the natural inner product between $(0,2)$ tensors
defined by the metric $g$, $
\ip{S,\eps}_g=\sum_{i,j=1}^mS(e_i,e_j)\eps(e_i,e_j)$.
In particular, the stress tensor with respect to $E_2$ is \cite{baieel}
$$
S(\ph,g)=\frac12|\d\ph|^2g-\ph^*h.
$$
\end{defn}

The goal of this section is to give sufficient and necessary conditions for a given map to be restricted harmonic, that is, to compute 
the first variation formula
 for $E_\ph:\sdiff_0(M,g)\ra\R$. 
We will make frequent use of some standard facts about Lie derivatives, which we now summarize
(see \cite{galhullaf} for details).

Let $\psi:M\ra M$ be a diffeomorphism. The {\em push forward}
of a vector field $X\in\Gamma(TM)$ by $\psi$ is the vector field
$\psi_*X(x)=\d\psi_x X(\psi^{-1}(x))$.
The {\em generalized pullback} of $X$ by $\psi$ is $\psi^*X=(\psi^{-1})_*X$,
its push forward by the inverse of $\psi$. The {\em generalized pullback}
of a $(0,1)$ tensor is just its usual
pullback, i.e. $(\psi^*\nu)(X)=\nu(\d\psi X)$.
We extend the generalized pullback to arbitrary $(p,q)$ tensors
by demanding that it has the properties of linearity ($\psi^*(\alpha+\beta)=\psi^*\alpha+\psi^*\beta$, $\psi^*(c\alpha)
=c\psi^*\alpha$) and
distributivity over tensor product 
($\psi^*(\alpha\otimes\beta)=(\psi^*\alpha)\otimes(\psi^*\beta)$).

Any vector field (of compact support) $X$ on $M$ defines a flow
$\Psi:\R\times M\ra M$, $\Psi(t,x)=\psi_t(x)$, such that, for fixed
$x$, $\gamma(t)=\psi_t(x)$ is the integral curve of $X$ with $\gamma(0)=x$. Each map $\psi_t:M\ra M$ is a diffeomorphism of $M$.
The {\em Lie derivative} of a $(p,q)$ tensor $\alpha$ with respect to $X$
is the $(p,q)$ tensor
$$
\L_X\alpha=\frac{\cd\: }{\cd t}\bigg|_{t=0}\psi^*_t\alpha.
$$
This defines, for each $X\in\Gamma(TM)$, a linear operator
$\L_X:\Gamma(T^p_qM)\ra\Gamma(T^p_qM)$. It is immediate from its definition that $\L_X$ preserves the subspaces of
totally symmetric $(0,q)$ tensors $\Gamma(\odot^qT^*M)$ (and $(p,0)$ tensors, $\Gamma(\odot^pTM)$).
\begin{prop}\label{lieprop}
 The operator $\L_X$ has (and is uniquely characterized by)
the following properties:
\bem
\ii For $f\in C^\infty(M)$, $\L_Xf=X[f]=df(X)$.
\ii For $Y\in\Gamma(TM)$, $\L_XY=[X,Y]$.
\ii For all tensors $\alpha,\beta$, $\L_X(\alpha\otimes\beta)=
(\L_X\alpha)\otimes\beta+\alpha\otimes(\L_X\beta)$.
\ii For any $(p,q)$ tensor $\alpha$ and any contraction map 
$c:T^p_qM\ra T^{p-k}_{q-k}M$, $\L_X(c(\alpha))=c(\L_X\alpha)$.
\ii For all $X,Y\in\Gamma(TM)$ and all $\alpha\in\Gamma(T^p_qM)$,
$(\L_X\circ\L_Y-\L_Y\circ\L_X)\alpha=\L_{[X,Y]}\alpha$.
\eem
\end{prop}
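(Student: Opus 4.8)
The plan is to read all five properties off the definition $\L_X\alpha=\cd_t|_{t=0}\psi_t^*\alpha$ (where $\psi_t$ is the flow of $X$), verifying the two ``geometric'' properties 1 and 2 by hand, deducing the two ``algebraic'' properties 3 and 4 from the corresponding structural properties of the generalized pullback, and then obtaining the commutator identity 5 as a consequence of uniqueness rather than by direct computation.

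Property 1 is immediate: since $\psi_t$ is the flow of $X$, one has $\psi_t^*f=f\circ\psi_t$, and differentiating at $t=0$ gives $\cd_t|_{t=0}(f\circ\psi_t)=df(X)=X[f]$. Property 2, $\L_XY=[X,Y]$, is the one step that genuinely uses the flow. I would write $\psi_t^*Y=(\psi_{-t})_*Y$, so that $(\psi_t^*Y)(x)=\d(\psi_{-t})_{\psi_t(x)}\,Y(\psi_t(x))$, and then evaluate $\cd_t|_{t=0}$ of this expression; the standard computation (for instance by testing against an arbitrary $f\in C^\infty(M)$ and using Hadamard's lemma to write $f\circ\psi_t=f+t\,g_t$ with $g_0=X[f]$) identifies the derivative with the Lie bracket $[X,Y]$.

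Properties 3 and 4 then come for free. The generalized pullback was defined to be linear and distributive over tensor products, so $\psi_t^*(\alpha\otimes\beta)=(\psi_t^*\alpha)\otimes(\psi_t^*\beta)$; applying $\cd_t|_{t=0}$ and the Leibniz rule for differentiation gives 3. Likewise $\psi_t^*$ is a fibrewise homomorphism of the tensor algebra and so commutes with every contraction $c$, giving $\psi_t^*(c(\alpha))=c(\psi_t^*\alpha)$, and differentiating yields 4. Before treating 5 I would record the uniqueness statement underlying the phrase ``uniquely characterized'': any operator obeying 1--4 is pinned down by its values on $C^\infty(M)$ and $\Gamma(TM)$. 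The only nonobvious point is that its action on a one-form $\nu$ is forced, which follows by writing the function $\nu(Y)$ as the contraction $c(\nu\otimes Y)$ and using 3 and 4 to get $X[\nu(Y)]=(\L_X\nu)(Y)+\nu([X,Y])$; since $Y$ is arbitrary this determines $\L_X\nu$ pointwise, and linearity together with the tensor-product rule then extends the operator uniquely to every $(p,q)$ tensor (each of which is locally a sum of tensor products of functions with the frame $e_i$ and its dual coframe).

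With uniqueness available, property 5 is painless: both $[\L_X,\L_Y]$ and $\L_{[X,Y]}$ satisfy 1--4 (the commutator of two derivations, each commuting with contractions, again has these features), they agree on functions by the direct computation $X[Y[f]]-Y[X[f]]=[X,Y][f]$, and they agree on vector fields by the Jacobi identity for the bracket; hence they coincide on all tensors. The main obstacle is thus property 2, the sole place where one must pass from formal algebra to the analysis of the flow; every other item is either a one-line differentiation of a pullback identity or a direct appeal to the uniqueness principle, so I expect the bulk of the work---and all of the care---to be concentrated there.
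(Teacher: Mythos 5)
The paper never proves this proposition: it is presented as a summary of standard facts with a pointer to \cite{galhullaf}, so there is no proof of record to compare yours against. Your argument is, however, a correct and complete rendering of the standard textbook proof. Properties 1--4 do follow exactly as you say, by differentiating the corresponding identities for the generalized pullback $\psi_t^*$ at $t=0$: it is linear, distributes over $\otimes$ by definition, and commutes with contractions because it acts on vectors by $\d\psi_t^{-1}$ and on covectors by the transpose of $\d\psi_t$, so pairings are preserved. Property 2 is indeed the only place where genuine analysis of the flow enters, and your plan (test $\psi_t^*Y$ against an arbitrary $f$ and use Hadamard's lemma) is the standard route. The derivation of property 5 from uniqueness is also sound: $[\L_X,\L_Y]$ is a derivation commuting with contractions (commutators of such operators inherit both features), it agrees with $\L_{[X,Y]}$ on functions trivially and on vector fields by the Jacobi identity, so the two coincide on all tensors. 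One small point worth making explicit in your uniqueness step: to extend from functions, vector fields and one-forms to arbitrary $(p,q)$ tensors via local frames, you need the operator to be local (so that it can be evaluated against a locally finite sum of decomposable tensors), which follows from the derivation property by the usual bump-function argument; alternatively you can avoid frames entirely by fully contracting $\alpha$ against $q$ arbitrary vector fields and $p$ arbitrary one-forms and applying properties 3 and 4 to the resulting function, which determines $\L_X\alpha$ pointwise. Either way, your proof is correct and is essentially the argument the cited reference gives.
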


It is convenient to extend the definition of divergence from vector fields and one-forms to
arbitrary totally symmetric $(0,q)$ tensors.
\begin{defn} Given a symmetric $(0,q)$ tensor $\alpha$ on $(M,g)$ its 
{\em divergence} is the symmetric $(0,q-1)$ tensor which maps any set
of $q-1$ vector fields $X_2,\ldots,X_{q}$ to the function
$$
(\div \alpha)(X_2,\ldots,X_{q})=
\sum_{i=1}^m(\nabla_{e_i}\alpha)(e_i,X_2,\ldots,X_{q}),
$$
where $\nabla$ is the Levi-Civita connexion on $(M,g)$. Note that $\div g=0$ and, for vector fields $\div\flat X=\div X=-\dstar\flat X$.
\end{defn}

\begin{remark}\label{opatig} It follows from the definition that, for any symmetric $(0,q)$ tensor $\alpha$, and any function
$f\in C^\infty(M)$,
$$
\div(f\alpha)=\iota_{\sharp \d f}\alpha+f\div\alpha,
$$
where $\iota$ denotes interior product,
$(\iota_X\alpha)(X_2,\ldots,X_q)=\alpha(X,X_2,\ldots,X_q)$.
In particular, 
$$
\div fg=\d f.
$$
\end{remark}

The first variation formula will rely on the following lemma, whose proof is presented in the appendix.

\begin{lemma}\label{gsos}
 Let $\alpha$ be a symmetric $(0,2)$ tensor on $(M,g)$ and
$X$ be a vector field. Then
$$
\ip{\alpha,\L_Xg}=2(\div(\iota_X\alpha)-(\div\alpha)(X)).
$$
\end{lemma}

\begin{rem}\label{netski} Putting $\alpha=g$ in Lemma \ref{gsos}, we get the useful fact that
$$
\ip{g,\L_Xg}_g=2\div\iota_Xg=2\div\flat X=2\div X.
$$
Hence the Lie derivative of $g$ along any divergenceless vector field is pointwise othogonal to $g$.
\end{rem}

Having completed these preliminaries, we may now state and prove the first variation formula.

\begin{thm}\label{firstvar}
A smooth map $\ph:(M,g)\ra (N,h)$ of finite Dirichlet energy is restricted harmonic if and only if the one-form
$\div\ph^*h$ on $M$ is exact.
\end{thm}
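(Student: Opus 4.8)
The plan is to exploit the fact that $E_2$ is geometrically natural (Definition \ref{netbskir}) in order to trade a variation of $\ph$ through volume-preserving diffeomorphisms for a variation of the metric $g$ by pullback, and then to read off the first variation directly from the stress tensor $S=\frac12|\d\ph|^2g-\ph^*h$. Concretely, let $X\in\Gamma_0(TM)$ be a divergenceless compactly supported vector field and let $\psi_t$ be its flow, so that $\psi_t\in\sdiff_0(M,g)$. Geometric naturalness gives $E_2(\ph\circ\psi_t)=E_2(\ph,(\psi_t^{-1})^*g)$, and since $\psi_t^{-1}=\psi_{-t}$ the curve of metrics $g_t=(\psi_t^{-1})^*g$ has initial velocity $\eps=\cd_t|_{t=0}g_t=-\L_Xg$. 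The defining property of the stress tensor then yields at once
$$
\frac{d\:}{dt}\bigg|_{t=0}E_2(\ph\circ\psi_t)=\frac12\ip{S,-\L_Xg}_{L^2}=-\frac12\int_M\ip{S,\L_Xg}_g\vol_g.
$$

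Next I would reduce this to a manifestly local expression. Applying Lemma \ref{gsos} with $\alpha=S$ converts the integrand into $\div(\iota_XS)-(\div S)(X)$; since $\iota_XS$ is a compactly supported one-form, $\int_M\div(\iota_XS)\vol_g=0$ by the divergence theorem, leaving $\int_M(\div S)(X)\vol_g$. I then compute $\div S$ using Remark \ref{opatig}: because $\div(fg)=\d f$ with $f=\frac12|\d\ph|^2$, one obtains $\div S=\d(\frac12|\d\ph|^2)-\div\ph^*h$. The exact piece contributes $\int_M X[\frac12|\d\ph|^2]\vol_g$, which vanishes because $X$ is divergenceless and compactly supported (write the integrand as $\div(\frac12|\d\ph|^2X)$ and integrate). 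Hence the first variation collapses to
$$
\frac{d\:}{dt}\bigg|_{t=0}E_2(\ph\circ\psi_t)=-\int_M(\div\ph^*h)(X)\vol_g=-\ip{\sharp\div\ph^*h,X}_{L^2}.
$$

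It remains to characterize when this vanishes for every $X\in\Gamma_0(TM)$. Writing $\omega=\div\ph^*h$, the condition is that $\omega$ be $L^2$-orthogonal to every compactly supported divergenceless vector field, equivalently (via $\flat$, using $\div X=-\dstar\flat X$) to every compactly supported co-closed one-form. Sufficiency is routine: if $\omega=\d f$ then $\ip{\d f,\flat X}_{L^2}=\ip{f,\dstar\flat X}_{L^2}=0$. Necessity is where the real work lies, and I expect it to be the main obstacle. Testing against $X=\sharp\dstar\gamma$ for compactly supported two-forms $\gamma$ forces $\d\omega=0$, so $\omega$ is closed; on the motivating domains $M=\R^k$ this already yields exactness by the Poincar\'e lemma. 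For a general oriented $(M,g)$ the cleanest route is the Hodge decomposition $\omega=\d f+\dstar\beta+\kappa$ on compact $M$: orthogonality of $\omega$ to the co-exact forms $\dstar\Omega^2$ kills $\dstar\beta$, and orthogonality to the harmonic forms $\mathscr{H}^1$ kills $\kappa$, so $\omega=\d f$ is exact---precisely because the $L^2$-orthogonal complement of the co-closed one-forms is the space of exact one-forms. The delicacy to watch is the noncompact/compact-support bookkeeping in this Hodge-theoretic step, which is why I would isolate it from the purely computational reduction above.
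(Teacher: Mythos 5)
Your computational reduction is exactly the paper's: geometric naturalness trades the variation through $\sdiff_0(M,g)$ for a variation of the metric by pullback, the stress tensor and Lemma \ref{gsos} give the first variation as $\int_M(\div S)(X)\vol_g$, and discarding the exact piece $\d(\frac12|\d\ph|^2)$ of $\div S$ when pairing against divergenceless compactly supported $X$ is a harmless reordering of the paper's closing observation that $\div S$ is exact if and only if $\div\ph^*h$ is. Your sufficiency argument (integrate by parts against the compactly supported coclosed form $\flat X$) and your derivation of closedness of $\omega=\div\ph^*h$ (test against $X=\sharp\dstar\gamma$, $\gamma$ a compactly supported two-form) also coincide with the paper's steps.

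The gap is the final step of necessity: upgrading ``closed, and $L^2$-orthogonal to all compactly supported coclosed one-forms'' to ``exact'' on a general, possibly noncompact, $M$ with $H^1(M)\neq 0$. The theorem is stated for arbitrary $(M,g)$, and neither of your two devices covers this case: the Poincar\'e lemma needs $H^1(M)=0$, while the Hodge decomposition $\omega=\d f+\dstar\beta+\kappa$ is a theorem about \emph{compact} manifolds --- on a noncompact $M$ there is no such decomposition of a general smooth one-form, and the claim that the orthogonal complement of the compactly supported coclosed one-forms consists of exact forms is precisely what has to be proved, not something one may quote. The paper's missing ingredient is Poincar\'e duality: since $\flat X\mapsto *\flat X$ identifies compactly supported coclosed one-forms with compactly supported closed $(m-1)$-forms, the vanishing of the first variation says exactly that the closed form $\div S$ annihilates every class in $H^{m-1}_c(M)$ under the wedge-integration pairing $H^1(M)\times H^{m-1}_c(M)\ra\R$; nondegeneracy of this pairing (Bott--Tu) then forces $[\div S]=0$, i.e.\ exactness, with the compact-support bookkeeping handled automatically. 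You correctly flagged this as the delicate point, but flagging it is not closing it; as written, your proof is complete only when $M$ is compact or $H^1(M)=0$.
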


\begin{proof} Let $\chi_t$ be any smooth curve in $\sdiff(M,g)$ through $\id_M$. Then $\cd_t|_{t=0}\chi_t(x)=X(x)$ is some
divergenceless vector field of compact support, and 
\beq
\frac{d\: }{dt}\bigg|_{t=0}E_2(\ph\circ\chi_t)=\frac{d\: }{dt}\bigg|_{t=0}E_2(\ph\circ\psi_t)
\eeq
where $\psi_t$ is the flow of $X$.
Since $E_2$ is geometrically natural, and $\psi_t$ is a diffeomorphism 
(with inverse
$\psi_{-t}$),
\bea
\frac{d\: }{dt}\bigg|_{t=0}E_2(\ph\circ\psi_t,g)&=&
\frac{d\: }{dt}\bigg|_{t=0}E_2(\ph,\psi_{-t}^*g)\nonumber \\
&=&\frac12\ip{S(\ph,g),\cd_t|_{t=0}\psi_{-t}^*g}_{L^2}\nonumber \\
&=&\frac12\ip{S(\ph,g),-\L_Xg}_{L^2}\label{stophere}\\
&=&-\frac12\int_M 2(\div(\iota_X S)-(\div S)(X))\vol_g\qquad\mbox{(by Lemma
\ref{gsos})}\nonumber \\
&=&\int_M(\div S)(X)\vol_g\qquad\mbox{(since $X$ has compact support)}\nonumber \\
&=&\ip{\flat X,\div S}_{L^2}\nonumber \\
&=&\int_M (\div S)\wedge (*\flat X).
\eea
Now $X$ is divergenceless if and only if $\flat X$ is coclosed. Hence, if $\ph$ is restricted harmonic then
$\div S$ is $L^2$ orthogonal to all {\em coexact} one forms of compact
support $\flat X=\dstar\nu$, so $\d(\div S)=0$ on every compact subset of $M$, and hence $\div S$ is closed.
The integration map
\beq
([\alpha],[\beta])\mapsto \int_M\alpha\wedge\beta
\eeq
defines a nondegenerate pairing $H^1(M)\times H^{m-1}_c(M)\ra\R$ between cohomology classes of closed one-forms, for example
$[\div S]$, and cohomology classes with compact support of closed $(m-1)$ forms, for example $[*\flat X]$,
(the so-called Poincar\'e duality \cite{bottu}). As we have shown above, if $\ph$ is
restricted harmonic then $([\div S],[\beta])\mapsto 0$ for all $[\beta]\in H_c^{m-1}(M)$. Hence, by nondegeneracy of the pairing, $[\div S]=0$,
that is, $\div S$ is exact. But $S=\frac12|\d\ph|^2g-\ph^*h$ so, by Remark \ref{opatig},
\beq
\div S=\d(\frac12|\d\ph|^2)-\div\ph^*h
\eeq
which is exact if and only if $\div\ph^*h$ is exact.

Conversely, if $\div\ph^*h$ is exact, then, as shown above
\beq
\frac{d\: }{dt}\bigg|_{t=0}E_2(\ph\circ\psi_t,g)=0
\eeq
for the flow $\psi_t$ of any divergenceless vector field of compact support, so $\ph$ is restricted harmonic.
\end{proof}

\begin{remark} If $\ph:(M,g)\ra (N,h)$ is restricted harmonic then
\beq\label{PDE}
\d(\div\ph^*h)=0,
\eeq
and if $H^1(M)=0$ the converse holds also (provided $E_2(\ph)$ is finite). So, on a manifold with $H^1(M)=0$, the restricted harmonic map
problem reduces to a nonlinear third-order PDE. If $H^1(M)\neq 0$ then, in addition to solving the PDE (\ref{PDE}), the map $\ph$ must satisfy
a collection of $b_1(M)$ integral constraints
\beq
\int_M\div\ph^*h\wedge\beta_a=0
\eeq
where $\{\beta_a\}$ is a set of generators of $H^{m-1}_c(M)\cong H^1(M)$. In this case, one could describe a map which satisfies  (\ref{PDE}) as
{\em locally} restricted harmonic, since it is critical for volume preserving diffeomorphisms which are trivial outside topologically simple
subsets of $M$. It would be interesting to construct examples of locally restricted harmonic maps that are not restricted harmonic (it is possible that
no such maps exist). 
\end{remark}

\begin{eg}[Restricted harmonic functions]\label{RHF}
We have seen (Remark \ref{soisanagi}) that the restricted harmonic map problem is trivial if the domain is one dimensional (all parametrized curves are
restricted harmonic). The opposite extreme, $\ph:(M,g)\ra\R$, is more interesting. A lengthy but straightforward calculation
shows that
$$
\d(\div\ph^*h)=-(\Delta\d\ph)\wedge\d\ph,
$$
where $\Delta=\d\dstar+\dstar\d$ is the Hodge laplacian on one-forms.
So, on a compact manifold with $H^1(M)=0$, a real function is restricted harmonic if and only if $\Delta\d\ph=f\d\ph$ for some $f:M\ra\R$. Note that any
eigenfunction of $\Delta$ satisfies this condition, and that if $\ph$ satisfies the condition so does $F\circ\ph$ for any smooth $F:\R\ra\R$. This should be
compared with harmonic functions which, on a compact domain, are necessarily constant. 
\end{eg}

\begin{eg}[Axially symmetric baby skyrmions]
Let $\ph:\R^2\ra S^2$ be any map of the form
\beq
\ph(r,\theta)=(\sin f(r)\cos B\theta,\sin f(r)\sin B\theta,\cos f(r))
\eeq
for some profile function $f(r)$ with $f(0)=\pi$, $f(\infty)=0$, and some integer $B$. This is an axially symmetric charge $B$ baby Skyrme
field. We have
\beq
\ph^*h=f'(r)^2dr^2+B^2\sin^2f(r)d\theta^2.
\eeq
Now $g=dr^2+r^2d\theta^2$ is divergenceless, so
\beq
\div dr^2=-\div(r^2d\theta^2)=\frac{1}{r}dr,
\eeq
whence
\beq
\div\ph^*h=2f'(r)f''(r)dr+\left[f'(r)^2-\frac{B^2}{r}^2\sin^2 f(r)\right]\frac{dr}{r}
\eeq
which is closed, hence exact. Hence, every axially symmetric baby skyrme field is restricted harmonic. If $f$ is chosen appropriately (i.e.\ $f(r)=0$ for all
$r\geq r_0$) several such charge $B$ structures can be trivially superposed without overlapping, and the resulting composite field is still restricted harmonic.
Recent numerical work suggests that structures of this type emerge in the $\eps\ra 0$ limit for baby Skyrme models with energy $E=E_0+\eps E_2+E_4$ where the
potential is chosen to support compactons \cite{adajaynayspewer}.
\end{eg}

Recall that a map $\ph:(M,g)\ra (N,h)$ is {\em weakly conformal} if $\ph^*h=fg$ for some function $f:M\ra\R$.

\begin{cor}\label{prscje} Let $\ph:(M,g)\ra (N,h)$ have finite Dirichlet energy and be weakly conformal. Then
$\ph$ is restricted harmonic.
\end{cor}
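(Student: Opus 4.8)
The plan is to reduce the statement directly to Theorem \ref{firstvar}, which characterizes restricted harmonicity by exactness of the one-form $\div\ph^*h$. Since $\ph$ is weakly conformal, we have $\ph^*h=fg$ for some $f:M\ra\R$, so the entire task becomes computing $\div(fg)$ and verifying that it is exact.

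First I would invoke Remark \ref{opatig}, which gives the product-type formula $\div(f\alpha)=\iota_{\sharp\d f}\alpha+f\div\alpha$ for a symmetric $(0,2)$ tensor $\alpha$. Taking $\alpha=g$ and using $\div g=0$ (recorded in the definition of divergence), the second term vanishes, leaving $\div(fg)=\iota_{\sharp\d f}g=(\sharp\d f)^\flat=\d f$. This is precisely the special case $\div(fg)=\d f$ already stated at the end of Remark \ref{opatig}, so it may simply be cited rather than rederived.

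Substituting, I obtain $\div\ph^*h=\d f$, which is manifestly exact (it is the differential of a function). By Theorem \ref{firstvar}, together with the hypothesis that $E_2(\ph)$ is finite, this is exactly the condition for $\ph$ to be restricted harmonic, completing the argument.

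There is essentially no obstacle here: the corollary is immediate once the correct identity is in hand, and the only point requiring care is the bookkeeping that $\iota_{\sharp\d f}g$ really does return $\d f$ (i.e.\ that raising an index with $\sharp$ and then contracting against $g$ recovers the flat, hence the original one-form). Since this identity is already packaged in Remark \ref{opatig}, the proof is a two-line application of the general first variation formula, and no separate stability or regularity analysis is needed at this stage.
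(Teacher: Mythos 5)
Your proof is correct and follows exactly the paper's own argument: both reduce the claim to the identity $\div(fg)=\d f$ from Remark \ref{opatig} and then invoke Theorem \ref{firstvar}. No gaps; the extra bookkeeping you note about $\iota_{\sharp\d f}g=\d f$ is indeed the only point of care, and it is already contained in the cited remark.
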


\begin{proof} By assumption, $\ph^*h=fg$ for some non-negative function
$f:M\ra\R$, so $\div\ph^*h=\d f$ by Remark \ref{opatig}, which is exact.
\end{proof}

\begin{eg}[Suspension Skyrme fields]
\label{barebell}
 Choose and fix a map $\RR:S^2\ra S^2$, and a smooth decreasing function $f:[0,\infty)\ra\R$ with $f(0)=k\pi$, $f(\infty)=0$ where $k\in\Z$.
Then the {\em suspension} of $\RR$ by $f$ is the mapping
\beq\label{suspension}
\phi:\R^3\ra S^3,\qquad \phi(rn)=(\cos f(r),\sin f(r)\RR(n))
\eeq
where $r\geq 0$ and $n\in S^2$. This is a Skyrme field of degree $B=k\deg\RR$ with
\beq\label{sopisanaugir}
\phi^*h=f'(r)^2dr^2+\sin^2 f(r)\RR^*g_{S^2}.
\eeq
 Since $H^1(\R^3)=0$ (or $H^1(\R^3\less\{0\})=0$ if
we make no regularity demand at the origin), such a map is restricted harmonic if and only if $\d(\div\phi^*h)=0$. 

Consider the case where  $\RR$ is holomorphic, so $\ph$ is within the {\em rational map ansatz} \cite[p365]{mansut}. Then $\RR$ is weakly
conformal, so $\ph^*g_{S^2}=\lambda g_{S^2}$ for some function $\lambda:S^2\ra[0,\infty)$. 
Now $\div\, dr^2=2r^{-1}dr$ and $\div g_{S^2}=-2r^{-3}dr$, so
\beq
\div\ph^*h=\left(2ff'+2\frac{(f')^2}{r}-\frac{2}{r^3}\sin^2f\lambda\right)dr+\frac{1}{r^2}\sin^2f\d\lambda.
\eeq
Hence
\beq
\d(\div\ph^*h)=\frac{2}{r^2}\left(\frac{2}{r}\sin f+f'\cos f\right)\d\lambda\wedge dr,
\eeq
so if $\ph$ is restricted harmonic then $r^2\sin f(r)$ is constant or $\lambda(n)$ is constant. The first condition
is incompatible with the boundary conditions for $f$, and the second implies that $R$ is an isometry and hence, up to symmetry,
coincides with $R=\id_{S^2}$. So the only restricted harmonic Skyrme fields in the rational map ansatz are hedgehog fields. Conversely, every hedgehog field 
\beq\label{phiH}
\phi_H(rn)=(\cos f(r),\sin f(r)n)
\eeq
is a $B=k$ restricted harmonic map. Furthermore, given any function $U:S^3\ra [0,\infty)$ which is isospin invariant,
that is, of the form $U(\ph_0,\ph_1,\ph_2,\ph_3)=u(\ph_0)$, and has $u(1)=0$, the BPS Skyrme energy with potential $V=\frac12U^2$ has a $B=1$ minimizer of this form,
with profile function $f(r)$ satisfying the ODE
\beq\label{sapyou}
-\frac{1}{r^2}\frac{df}{dr}\sin^2 f(r)=u(\cos f),\qquad f(0)=\pi,\quad f(\infty)=0.
\eeq
In particular, the model with $u(\ph_0)=(1-\ph_0)^3$ supports the suspension of $\id_{S^2}$ by $f_*(r)=2\cot^{-1}r$ as a $B=1$ BPS skyrmion, and this map
is precisely inverse stereographic projection $\R^3\ra S^3$. Since this is conformal and has finite Dirichlet energy ($E_2=6\pi^2$), we could have deduced
that it is restricted harmonic directly from Corollary \ref{prscje}. 

More generally, the model with potential $V=\frac12(1-\ph_0)^{2\alpha}$, where $\alpha>0$ is a constant,
 has a $B=1$ BPS skyrmion within the hedgehog ansatz. This skyrmion has compact
support if $\alpha<\frac32$, and then occupies total volume
\beq
{\rm Vol}_1=4\pi\int_0^\pi\frac{\sin^2 f}{(1-\cos f)^\alpha}df.
\eeq
It is $C^1$ if $\alpha\geq 1$, and has finite $E_2$ if $\frac12<\alpha<\frac92$. The range $\frac12<\alpha<\frac32$ 
is particularly interesting. In terms of the ball-volume coordinate $v=4\pi r^3/3$, the $k=1$ profile function $f_1(v)$ satisfies the ODE
\beq\label{sayogeso}
\frac{df}{dv}=-\frac{(1-\cos f)^\alpha}{4\pi \sin^2 f}
\eeq
with support $[0,{\rm Vol_1}]$. We can glue together any {\em odd} number, $k$, of copies of $f_1$, using the symmetries $f(v)\mapsto -f(-v)$,
$f(v)\mapsto f(v-c)$ and $f(v)\mapsto f(v)+2\pi$ of (\ref{sayogeso}), to obtain a decreasing profile function $f_{k}(v)$ with $f_{k}(0)=k\pi$ and
$f_k(v)=0$ for all $v\geq k{\rm Vol_1}$, satisfying (\ref{sayogeso}). This is a $B=k$ BPS skyrmion consisting of a charge 1 spherical core surrounded
by $(k-1)/2$ concentric spherical charge 2 shells. As shown above, it is restricted harmonic (though with low regularity). However, its Dirichlet energy
grows like $B^{7/3}$ at large $B$, so this type of BPS skyrmion certainly does not minimize $E_2$ among all BPS solutions of odd charge $B$ for $B$ 
sufficiently large, since it has higher $E_2$ than a superposition of $B$ charge 1 solutions.

Consider now the case where $\RR=\RR_B:S^2\ra S^2$, $\RR_B(\theta,\phi)=(\theta,B\phi)$. Suspension maps of this form
\beq\label{sayosuco}
\ph_B(rn)=(\cos f_B(r),\sin f_B(r)\RR_B(n)),
\eeq
with profile 
function $f_B(r)=f(B^{-1/3}r)$ (with $k=1$), occur frequently  
in studies of near BPS skyrmions \cite{adasanwer,adasanwer2,adanaysanwer,bonmar,bonharmar}. Clearly such fields, for $B>1$, have a string of
conical singularities along the $z$-axis. Nonetheless, provided $f$ satisfies (\ref{sapyou}), they minimize $E_{BPS}$ for the potential $u(\ph_0)$ within the
degree $B$ class. In fact $\ph_B=\ph_H\circ\psi_B$ where $\psi_B:\R^3\less\R_z\ra\R^3\less\R_z$ is the volume preserving $B$-fold covering map
\beq
\psi_B(r(\sin\theta\cos\phi,\sin\theta\sin\phi,\cos\theta))=B^{-1/3}r(\sin\theta\cos B\phi,\sin\theta\sin B\phi,\cos\theta).
\eeq
Unfortunately, none of these BPS skyrmions, for $B>1$, are restricted harmonic, as we now demonstrate. Clearly
\beq
\ph_B^*h=f_B'(r)^2 dr^2 +\sin^2 f_B(r)(d\theta^2+B^2\sin^2\theta d\phi^2).
\eeq
A straightforward calculation in the the local frame
$e_1=\cd_r$, $e_2=r^{-1}\cd_\theta$, $e_3=(r\sin\theta)^{-1}\cd_\phi$ yields
\beq
\div dr^2=\frac{2}{r}dr,\quad
\div d\theta^2=-\frac{1}{r^3}dr-\frac{\cot\theta}{r^2}d\theta,\quad
\div(\sin^2\theta d\phi^2)=-\frac{1}{r^3}dr-\frac{\cot\theta}{r^2}d\theta.
\eeq
Hence
\beq
\div \ph_B^*h=p(r)dr+(B^2-1)\frac{\sin^2 f_B(r)}{r^2}\cot\theta d\theta
\eeq
for a suitably defined function $p(r)$, and this one-form is not closed if $B>1$. It follows that there exist fields in the $\sdiff(\R^3)$ orbit of $\ph_B$ which 
better approximate the minimizer of $E_{BPS}+c_2E_2$ for small $c_2>0$ than $\ph_B$. Note that, like the concentric shell skyrmions described above, these
BPS skyrmions also have energy growth $E_2(\ph_B)\sim B^{7/3}$ at large $B$.
\end{eg}

\begin{remark}[Steepest descent] Let $M$ be compact. Then, by the Hodge isomorphism theorem, there is a unique $L^2$ orthogonal
decomposition of the one-form $\div\ph^*h$ as
\beq
\div\ph^*h=\nu_{harmonic}+\nu_{coexact}+\nu_{exact}=\nu_{coclosed}+\nu_{exact}
\eeq
and $\ph$ is restricted harmonic if and only if $\nu_{coclosed}=0$. From the proof of Theorem \ref{firstvar}, we see that the rate of change of 
$E_\ph(\psi)=E_2(\ph\circ\psi)$ along $X\in T_{\id_M}\sdiff(M,g)=\Gamma_0(TM)$ is
\beq
\d E_\ph(X)=\ip{\flat X,\div S}_{L^2}=-\ip{\flat X,\div\ph^*h)}_{L^2}=-\ip{\flat X,\nu_{coclosed}}_{L^2}
\eeq
since $\dstar\flat X=0$. Hence, the direction of steepest descent of $E_\ph$ at $\id_M$ is
\beq
X_{steepest}=\sharp\nu_{coclosed}.
\eeq
To construct this, we seek a function $f:M\ra\R$ such that $\div\ph^*h -\d f$ is coclosed, that is,
\beq
\Delta f=\dstar(\div\ph^*h).
\eeq
Given the solution to this Poisson equation (which is unique up to an additive constant),
\beq
X_{steepest}=\sharp(\div\ph^*h-\d f).
\eeq
In practice, the easiest way to construct a divergenceless vector field is to write down a {\em potential} for it, i.e.\ $\omega\in\Omega^2(M)$
such that $X=\sharp\dstar\omega$. If $H^1(M)=0$ then all divergenceless vector fields arise in this way. The rate of change of $E_\ph$ along the
vector field generated by potential $\omega$ is
\beq
(\d E_\ph\circ\sharp\circ\dstar)(\omega)=-\ip{\omega,\d(\div\ph^*h)}_{L^2}.
\eeq
Hence the potential $\omega$ which gives steepest descent, for fixed $\|\omega\|_{L^2}$, is in the direction
\beq
\omega_{steepest}=\d(\div\ph^*h).
\eeq
Note that $\sharp\dstar\omega_{steepest}\neq X_{steepest}$ in general, since $\sharp\delta:\Omega^2(M)\ra\Gamma_0(TM)$ is not an $L^2$ isometry.
\end{remark}

\section{Second variation formula and stability}
\label{sec-second}

\news

Recall that $\ph$ is restricted harmonic if it is a critical point of
$E_2$ restricted to its $\sdiff_0$ orbit. Given such a critical point, 
it is natural to ask about its {\em stability}, that is, whether it is
a local minimum of energy (stable), or merely a saddle point (unstable).
To answer this, one must compute the {\em second variation} of the energy
about the critical point to obtain, in analogy with standard harmonic map theory, its hessian \cite[p.\ 91]{baiwoo}:

\begin{defn} Let $\ph:(M,g)\ra (N,h)$ be restricted harmonic and
$X,Y$ be any pair of divergenceless vector fields on $M$.
Let $\psi_{s,t}$ be a two-parameter variation of $\psi_{0,0}=\id_M$
in $\sdiff_0(M,g)$ tangent to $X,Y$, that is, with $X=\cd_s\psi_{s,t}|_{s=t=0}$,  
$Y=\cd_t\psi_{s,t}|_{s=t=0}$.
The {\em hessian} of $E_2$ at $\ph$ is the bilinear form
$$
\hess:\Gamma_0(TM)\times \Gamma_0(TM)\ra \R,\qquad
\hess(X,Y)=\frac{\cd^2 E_2(\ph\circ\psi_{s,t})}{\cd s\: \cd t}\bigg|_{s=t=0}.
$$
We say that $\ph$ is {\em stable} if $\hess(X,X)\geq 0$ for all $X$,
 and {\em unstable} otherwise.
\end{defn}

To compute an explicit formula for $\hess$, it is useful to have an alternative formulation
of the first variation.

\begin{lemma}\label{firstvar2} Let $\ph:(M,g)\ra (N,h)$ be restricted harmonic. 
Then $\ip{\ph^*h,\L_Xg}_{L^2}=0$ for all $X\in\Gamma_0(TM)$.
\end{lemma}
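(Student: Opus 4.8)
The plan is to read the conclusion straight off the computation already performed in the proof of Theorem \ref{firstvar}, combined with Remark \ref{netski}. Recall that equation (\ref{stophere}) in that proof established, for the flow $\psi_t$ of any $X\in\Gamma_0(TM)$,
$$
\frac{d\:}{dt}\bigg|_{t=0}E_2(\ph\circ\psi_t)=\tfrac12\ip{S(\ph,g),-\L_Xg}_{L^2},
$$
where $S=\tfrac12|\d\ph|^2g-\ph^*h$ is the $E_2$ stress tensor. Since $\ph$ is by hypothesis restricted harmonic, the left-hand side vanishes for every divergenceless vector field $X$ of compact support, so first I would record the identity $\ip{S,\L_Xg}_{L^2}=0$ for all $X\in\Gamma_0(TM)$.

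The only work remaining is to show that the pure-trace part of $S$ contributes nothing. Writing $S=\tfrac12|\d\ph|^2g-\ph^*h$ and using bilinearity of the $L^2$ pairing, I would split
$$
0=\ip{S,\L_Xg}_{L^2}=\tfrac12\ip{|\d\ph|^2g,\L_Xg}_{L^2}-\ip{\ph^*h,\L_Xg}_{L^2}.
$$
Here the key input is Remark \ref{netski}: pointwise $\ip{g,\L_Xg}_g=2\div X$. Since $|\d\ph|^2$ is a scalar function, the integrand of the first term is $|\d\ph|^2\ip{g,\L_Xg}_g=2|\d\ph|^2\div X$, which vanishes identically because $X$ is divergenceless. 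Hence $\ip{|\d\ph|^2g,\L_Xg}_{L^2}=0$, and the displayed equation collapses to $\ip{\ph^*h,\L_Xg}_{L^2}=0$, as claimed.

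There is no real obstacle here; the lemma is essentially a repackaging of the first variation formula, isolating the genuinely map-dependent tensor $\ph^*h$ from the geometrically inert trace term $\tfrac12|\d\ph|^2g$. The one point worth stating carefully is the pointwise orthogonality $\ip{g,\L_Xg}_g=2\div X=0$ supplied by Remark \ref{netski}, which is exactly what makes the $|\d\ph|^2g$ term disappear before integration rather than merely in the integral. The finiteness of $E_2(\ph)$ and the compact support of $X$ guarantee all the integrals converge, so no further regularity or convergence discussion is needed.
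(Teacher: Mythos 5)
Your proof is correct and follows essentially the same route as the paper: both invoke line (\ref{stophere}) from the proof of Theorem \ref{firstvar} to write the first variation as $-\frac12\ip{S,\L_Xg}_{L^2}$, then use Remark \ref{netski} to discard the pure-trace part $\frac12|\d\ph|^2g$ of the stress tensor, leaving $\ip{\ph^*h,\L_Xg}_{L^2}=0$ by the restricted harmonicity hypothesis. The only difference is presentational (you set the derivative to zero first and then split $S$, whereas the paper simplifies the pairing first), which is immaterial.
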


\begin{proof}
Let $\psi_t$ be the flow of $X\in\Gamma_0(TM)$. Following the proof of Theorem \ref{firstvar} to line (\ref{stophere}), we have that
\beq
\frac{d\: }{dt}\bigg|_{t=0}E_2(\ph\circ\psi_t)=-\frac12\ip{S,\L_Xg}_{L^2}
=-\frac12\ip{\frac12|\d\ph|^2g-\ph^*h,\L_Xg}_{L^2}=\frac12\ip{\ph^*h,\L_Xg}_{L^2}
\eeq
by Remark \ref{netski}.
\end{proof}

\begin{thm}\label{secondvar}
 Let $\ph$ be restricted harmonic. Then
$$
\hess(X,Y)=\frac12\ip{\L_X\ph^*h,\L_Yg}_{L^2}.
$$
\end{thm}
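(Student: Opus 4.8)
The plan is to compute the mixed second derivative directly from the definition of the hessian, reusing the first variation machinery already developed. The key observation is that the second variation of a geometrically natural functional can be organized so that one $t$-differentiation acts on the metric variation and the other effectively acts on $\ph^*h$ through the Lie derivative $\L_X$.

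I would proceed as follows. First, fix the two-parameter family $\psi_{s,t}$ and for each fixed $s$ let $t\mapsto\psi_{s,t}$ have velocity $Y_s$ at $t=0$; by geometric naturality, $E_2(\ph\circ\psi_{s,t})=E_2(\ph,(\psi_{s,t}^{-1})^*g)$. Differentiating in $t$ at $t=0$ and using the first variation formula (following the proof of Theorem~\ref{firstvar} up to line~(\ref{stophere}), but now with the map $\ph$ replaced by $\ph\circ\psi_{s,0}$, or equivalently with the metric $g$ replaced by $(\psi_{s,0}^{-1})^*g$), I obtain
$$
\frac{\cd}{\cd t}\bigg|_{t=0}E_2(\ph\circ\psi_{s,t})=-\frac12\ip{S(\ph\circ\psi_{s,0},g),\L_{Y}g}_{L^2},
$$
where I have used that the velocity field in the $t$-direction at $s=t=0$ is $Y$ and that, since $\psi_{s,0}$ is a diffeomorphism, the variation along $t$ is still generated by a divergenceless field. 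The cleaner route is to use Lemma~\ref{firstvar2}: along the one-parameter family $t\mapsto\psi_{s,t}$ (for fixed $s$), the restricted harmonic map $\ph\circ\psi_{s,0}$ gives
$$
\frac{\cd}{\cd t}\bigg|_{t=0}E_2(\ph\circ\psi_{s,t})=\frac12\ip{(\ph\circ\psi_{s,0})^*h,\L_{Y}g}_{L^2}.
$$

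Next I differentiate this expression in $s$ at $s=0$. The $\L_Y g$ factor and the measure $\vol_g$ are independent of $s$, so the only $s$-dependence sits in the pullback metric $(\ph\circ\psi_{s,0})^*h=\psi_{s,0}^*(\ph^*h)$. Differentiating the generalized pullback in $s$ at $s=0$ gives precisely the Lie derivative $\L_X(\ph^*h)$, since $\cd_s\psi_{s,0}|_{s=0}=X$. This yields
$$
\hess(X,Y)=\frac{\cd^2 E_2(\ph\circ\psi_{s,t})}{\cd s\,\cd t}\bigg|_{s=t=0}=\frac12\ip{\L_X\ph^*h,\L_Yg}_{L^2},
$$
which is the claimed formula.

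\textbf{The main obstacle} is justifying the symmetry/consistency of the mixed partial and, more subtly, handling the terms one drops. When differentiating the first-variation expression in $s$, one must confirm that the critical-point hypothesis (restricted harmonicity of $\ph$) kills any unwanted contributions arising from the $s$-dependence of the velocity field $Y_s=\cd_t\psi_{s,t}|_{t=0}$ and from the commutator of the two flows; a two-parameter variation need not have commuting velocity fields, so $\cd_s Y_s|_{s=0}$ is generally nonzero. The point is that such terms contribute a factor of the form $\ip{\ph^*h,\L_{(\cdots)}g}_{L^2}$ with a divergenceless generator, which vanishes by Lemma~\ref{firstvar2} precisely because $\ph$ is restricted harmonic. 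I would therefore take care to isolate these terms and invoke Lemma~\ref{firstvar2} to discard them, leaving only the symmetric bilinear term. The remaining manipulations — interchanging $\cd_s$ with the $L^2$ integral, and identifying $\cd_s\psi_{s,0}^*(\ph^*h)|_{s=0}=\L_X\ph^*h$ — are routine applications of the definition of the Lie derivative and the properties collected in Proposition~\ref{lieprop}.
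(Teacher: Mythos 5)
Your argument is correct and shares its computational core with the paper (geometric naturality, the stress tensor, Remark \ref{netski}, and the identification $\cd_s|_{s=0}\psi_{s,0}^*(\ph^*h)=\L_X\ph^*h$), but it handles the two-parameter variation genuinely differently. The paper sidesteps your ``main obstacle'' entirely by \emph{choosing} the variation $\psi_{s,t}=\psi_s\circ\chi_t$, where $\psi_s,\chi_t$ are the flows of $X,Y$: for this choice $\psi_{s,0}^{-1}\circ\psi_{s,t}=\chi_t$ is independent of $s$, so the $t$-velocity translated back to the identity is exactly $Y$ for every $s$, no correction term from non-commuting flows ever appears, and restricted harmonicity is never invoked in the computation itself. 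Your route keeps the variation arbitrary, which forces you to track the $s$-dependence of the translated velocity field $Z_s=\cd_t|_{t=0}\bigl(\psi_{s,0}^{-1}\circ\psi_{s,t}\bigr)$ and to kill the resulting extra term $\frac12\ip{\ph^*h,\L_{\cd_sZ_s|_{s=0}}g}_{L^2}$ via Lemma \ref{firstvar2} --- legitimate, since $\cd_sZ_s|_{s=0}$ is again divergenceless and compactly supported. What your version buys is a proof that the hessian is well defined, i.e., that the mixed second derivative is independent of the variation chosen tangent to $(X,Y)$; the paper's computation, taken literally, establishes the formula only for its particular variation and implicitly relies on this well-definedness (a standard fact at a critical point, but one worth having on record). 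What the paper's choice buys is brevity: the correction term never has to be isolated.

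Two corrections to your write-up. First, in your fixed-$s$ displayed formulas the Lie derivative should be $\L_{Z_s}g$, not $\L_Yg$; you repair this in your closing paragraph, but as written those displays are valid only at $s=0$. Second, the phrase ``the restricted harmonic map $\ph\circ\psi_{s,0}$'' is wrong: restricted harmonicity of $\ph$ does not transfer to other points of its $\sdiff_0$ orbit, since criticality of $E_\ph$ at $\id_M$ says nothing about criticality at $\psi_{s,0}$. Fortunately you never need it. The identity $\cd_t|_{t=0}E_2(\ph\circ\psi_{s,t})=\frac12\ip{(\ph\circ\psi_{s,0})^*h,\L_{Z_s}g}_{L^2}$ is the general first-variation computation (the proof of Theorem \ref{firstvar} up to line (\ref{stophere}), followed by Remark \ref{netski}), valid for any finite-energy map and divergenceless generator; the \emph{vanishing} statement of Lemma \ref{firstvar2} is needed only at $s=0$, for $\ph$ itself, to dispose of the correction term.
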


\begin{proof} Given $X,Y\in\Gamma_0(TM)$, let $\psi_s$, $\chi_t$ denote their 
flows, and choose $\psi_{s,t}=\psi_s\circ\chi_t$ as the two-parameter variation
of $\id_M$ in $\sdiff_0(M,g)$ tangent to them. Let $\ph_s=\ph\circ\psi_s$.
Then
\bea
\hess(X,Y)&=&\frac{\cd^2 E_2(\ph\circ\psi_{s}\circ\chi_t)}{\cd s\: \cd t}\bigg|_{s=t=0}
=\frac{\cd^2 E_2(\ph_s\circ\chi_t,g)}{\cd s\: \cd t}\bigg|_{s=t=0}\nonumber \\
&=&\frac{\cd\:}{\cd s}\bigg|_{s=0}\frac{\cd\:}{\cd t}\bigg|_{t=0} E_2(\ph_s,\chi_{-t}^*g)\qquad
\mbox{(since $E_2$ is geometrically natural)}\nonumber \\
&=&\frac{d\: }{ds}\bigg|_{s=0}\frac12\ip{S(\ph_s,g),-\L_Yg}_{L^2}.
\eea
Now
\bea
S(\ph_s,g)&=&\frac12|\d\ph_s|^2g-\ph_s^*h\nonumber \\
\Rightarrow\quad
\frac{d\: }{ds}\bigg|_{s=0}S(\ph_s,g)&=&fg-\frac{d\: }{ds}\bigg|_{s=0}\psi_s^*(\ph^*h)
=fg-\L_X\ph^*h,
\eea
and, by Remark \ref{netski}, $fg$ is pointwise orthogonal to $\L_Yg$ (since $\div Y=0$).
The result immediately follows.
\end{proof}

\begin{cor}\label{keanwarimyco} Let $\ph$ be weakly conformal with finite $E_2$. Then $\ph$ is a stable restricted harmonic map.
\end{cor}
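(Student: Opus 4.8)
The plan is to combine the two results already established: weak conformality supplies restricted harmonicity directly, and Theorem~\ref{secondvar} then reduces stability to a pointwise positivity computation. First I would observe that $\ph$ is restricted harmonic by Corollary~\ref{prscje}, since it is weakly conformal with finite Dirichlet energy; this is exactly what is needed to apply the second variation formula. Writing $\ph^*h=fg$ with $f:M\ra\R$ non-negative (as it must be, being the conformal factor of a pullback metric against the positive-definite $g$), I then invoke Theorem~\ref{secondvar}, which gives $\hess(X,X)=\frac12\ip{\L_X\ph^*h,\L_Xg}_{L^2}$ for every $X\in\Gamma_0(TM)$.

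The key step is to expand $\L_X\ph^*h=\L_X(fg)$. Using the Leibniz property of the Lie derivative (Proposition~\ref{lieprop}) together with $\L_Xf=df(X)$, I get $\L_X(fg)=(df(X))\,g+f\,\L_Xg$. Substituting this into the hessian produces two terms. The cross term $\frac12\ip{(df(X))g,\L_Xg}_{L^2}$ vanishes pointwise, because by Remark~\ref{netski} we have $\ip{g,\L_Xg}_g=2\div X=0$ for divergenceless $X$. What survives is $\hess(X,X)=\frac12\int_M f\,|\L_Xg|_g^2\,\vol_g$, which is non-negative since $f\geq0$ everywhere and the integrand is manifestly non-negative; hence $\ph$ is stable.

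The honest assessment is that there is no serious obstacle here: the entire content is the Leibniz expansion of $\L_X(fg)$ and the orthogonality observation of Remark~\ref{netski}. The only point requiring minor care is the integrability of $f\,|\L_Xg|_g^2$, but this is automatic because $X$ has compact support, so the integral is effectively taken over a compact region on which $f$ and $\L_Xg$ are smooth and bounded.
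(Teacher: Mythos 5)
Your proof is correct and follows essentially the same route as the paper: restricted harmonicity from Corollary~\ref{prscje}, then Theorem~\ref{secondvar} with the Leibniz expansion of $\L_X(fg)$ and the pointwise orthogonality $\ip{g,\L_Xg}_g=2\div X=0$ of Remark~\ref{netski} to kill the cross term, leaving the manifestly non-negative term $\frac12\int_M f|\L_Xg|_g^2\vol_g$. The only cosmetic difference is that the paper writes the conformal factor as $f^2$ so the hessian becomes a manifest square $\|f\L_Xg\|_{L^2}^2$, whereas you carry the non-negative factor $f$ directly; the two are equivalent.
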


\begin{proof} We have seen (Corollary \ref{prscje}) that $\ph$ is restricted harmonic.
By assumption, $\ph^*h=f^2g$ for some $f\in C^\infty(M)$, so
\beq
\hess(X,X)=\frac12\ip{f^2\L_Xg+2fX[f]g,\L_Xg}_{L^2}=\|f\L_Xg\|_{L^2}^2
\eeq
by Remark \ref{netski}.
\end{proof}

In particular the hedgehog BPS skyrmion for potential $V=\frac12(1-\ph_0)^6$ (inverse stereographic projection) is a stable restricted harmonic map.
In fact, we see that, since weakly conformal maps have isolated critical points, $\hess(X,X)>0$
unless $X$ is Killing ($\L_Xg=0$), that is, generates an isometry. Hence, weakly conformal maps are
local minima of $E_2$ on the homogeneous space $\sdiff_0\cdot\ph/{\rm Isom}\cdot \ph$.

\begin{rem} It is possible for an {\em unstable} harmonic map to be a {\em stable} restricted harmonic map, if the space of
unstable variations is $L^2$ orthogonal to $\d\ph(\Gamma_0(TM))$. For example, the identity map $\id:S^n\ra S^n$ is an unstable harmonic map for $n\geq 3$
\cite{smi},
but is conformal, so is stable as a restricted harmonic map.
\end{rem}

\begin{rem} By its definition, $\hess$ should be a {\em symmetric} bilinear form on
$\Gamma_0(TM)$. Our formula for it is not manifestly symmetric, so, as a consistency check, we should
verify that $\hess(X,Y)=\hess(Y,X)$ directly from our formula.
\end{rem}

\begin{defn} Given a symmetric $(0,2)$ tensor $\alpha$ on $(M,g)$, denote by $\sharp\alpha$ the
symmetric $(2,0)$ tensor metrically dual to $\alpha$ with respect to $g$. Explicitly,
\beq
\sharp\alpha=\sum_{i,j}\alpha(e_i,e_j)e_i\otimes e_j.
\eeq
Conversely, given a symmetric $(2,0)$ tensor $\beta$, denote by $\flat\beta$ the symmetric
$(0,2)$ tensor metrically dual to $\beta$. Clearly $\flat\sharp\alpha=\alpha$ and $\sharp\flat\beta
=\beta$. Furthermore, $\ip{\wt\alpha,\alpha}=c(\sharp\wt\alpha\otimes\alpha)$ where $c$ denotes the
unique contraction $\odot^2TM\otimes\odot^2T^*M\ra\R$.
\end{defn}

\begin{lemma}\label{LXadj}
 Let $X$ be a divergenceless vector field of compact support on $(M,g)$. Then the formal $L^2$ adjoint
of the Lie derivative operator $\L_X:\Gamma(T^*M\odot T^*M)\ra\Gamma(T^*M\odot T^*M)$ is
$$
\L_X^\dagger=-\flat\L_X\sharp.
$$
\end{lemma}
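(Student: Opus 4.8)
The plan is to establish the pointwise identity
$$
X\big[\ip{\alpha,\beta}_g\big]=\ip{\flat\L_X\sharp\alpha,\beta}_g+\ip{\alpha,\L_X\beta}_g
$$
for any two symmetric $(0,2)$ tensors $\alpha,\beta$, and then integrate it against $\vol_g$, using $\div X=0$ to annihilate the left-hand side. The starting point is the identity recorded in the Definition immediately preceding the lemma, $\ip{\alpha,\beta}_g=c(\sharp\alpha\otimes\beta)$, which expresses the fibrewise inner product as a contraction of the mixed tensor $\sharp\alpha\otimes\beta$. Since $\L_X$ acts on the scalar $\ip{\alpha,\beta}_g$ simply as $X[\,\cdot\,]$ (Proposition \ref{lieprop}(i)), the whole computation reduces to applying $\L_X$ to this contraction.

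Next I would apply the Leibniz rule over tensor products (Proposition \ref{lieprop}(iii)) together with the fact that $\L_X$ commutes with contractions (Proposition \ref{lieprop}(iv)) to obtain
$$
X\big[\ip{\alpha,\beta}_g\big]=\L_X\,c(\sharp\alpha\otimes\beta)=c\big((\L_X\sharp\alpha)\otimes\beta\big)+c\big(\sharp\alpha\otimes\L_X\beta\big).
$$
Here I use that $\L_X$ preserves symmetric $(2,0)$ tensors (noted after the definition of $\L_X$), so $\L_X\sharp\alpha$ is again a symmetric $(2,0)$ tensor and $\flat\L_X\sharp\alpha=\flat\sharp(\flat\L_X\sharp\alpha)$ makes sense. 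Rewriting each contraction back as an inner product via $c(\sharp\wt\alpha\otimes\gamma)=\ip{\wt\alpha,\gamma}_g$ — applied with $\wt\alpha=\flat\L_X\sharp\alpha$ in the first term and $\wt\alpha=\alpha$ in the second — yields exactly the claimed pointwise identity.

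Finally I would integrate over $M$. Because $\div X=0$, Remark \ref{opatig} (or directly $\div(fX)=X[f]+f\div X$) gives $\div\big(\ip{\alpha,\beta}_g\,X\big)=X[\ip{\alpha,\beta}_g]$, and since $X$ has compact support the divergence theorem forces $\int_M X[\ip{\alpha,\beta}_g]\,\vol_g=0$. Hence
$$
\ip{\flat\L_X\sharp\alpha,\beta}_{L^2}+\ip{\alpha,\L_X\beta}_{L^2}=0,
$$
and reading off the adjoint of $\L_X$ acting on $\beta$, namely $\ip{\L_X\beta,\alpha}_{L^2}=\ip{\beta,-\flat\L_X\sharp\alpha}_{L^2}$, gives $\L_X^\dagger=-\flat\L_X\sharp$. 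The only delicate point is bookkeeping with $\sharp$ and $\flat$: one must be careful that the contraction identity is being applied to the mixed $(2,0)\otimes(0,2)$ tensor after one factor has been Lie-differentiated, and that $\L_X\sharp\alpha$ indeed stays symmetric so that $\flat\L_X\sharp$ returns a symmetric $(0,2)$ tensor as required. Beyond that, the argument is routine, the compact support of $X$ being exactly what licenses the integration by parts.
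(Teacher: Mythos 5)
Your proof is correct and follows essentially the same route as the paper's: both expand $X[\ip{\alpha,\beta}_g]=\L_X c(\sharp\alpha\otimes\beta)$ via the Leibniz rule and commutation of $\L_X$ with contraction, then kill the left-hand side upon integration using $\div X=0$, compact support, and the divergence theorem. The only difference is presentational (you state the pointwise identity first and integrate afterwards, while the paper packages the same computation as an evaluation of $\div(\ip{\wt\alpha,\alpha}X)$), so there is nothing substantive to add.
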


\begin{proof} We want to show that, for all symmetric $(0,2)$ tensors $\alpha,\wt\alpha$,
$$
\ip{\wt\alpha,\L_X\alpha}_{L^2}=-\ip{\flat\L_X\sharp\wt\alpha,\alpha}_{L^2}.
$$
Now, $\ip{\wt\alpha,\alpha}=c(\sharp\wt\alpha\otimes\alpha)$, where $c$ denotes contraction, so
\bea
\div(\ip{\wt\alpha,\alpha}X)&=&X[\ip{\wt\alpha,\alpha}]+\ip{\wt\alpha,\alpha}\div X\nonumber \\
&=&\L_X(c(\sharp\wt\alpha\otimes\alpha))+0\nonumber \\
&=&c(\L_X(\sharp\wt\alpha\otimes\alpha))\nonumber \\
&=&c((\L_X\sharp\wt\alpha)\otimes\alpha+\sharp\wt\alpha\otimes\L_X\alpha)\nonumber \\
&=&\ip{\flat\L_X\sharp\wt\alpha,\alpha}+\ip{\wt\alpha,\L_X\alpha}.
\eea
where we have repeatedly used Proposition \ref{lieprop}.
Now integrate both sides over $M$ and use the divergence theorem.
\end{proof}

\begin{defn} Given any pair of $(0,2)$ tensors $A,B$ on $(M,g)$, their {\em dot product} is
the $(0,2)$ tensor $A\cdot B$ defined by
$$
(A\cdot B)(X,Y)=\sum_{i=1}^mA(X,e_i)B(e_i,Y).
$$
\end{defn}

\begin{lemma}\label{th}
 If $X$ is a vector field and $\alpha\in\Gamma(T^*M\odot T^*M)$, then
$$
\flat\L_X\sharp\alpha=\L_X\alpha-\alpha\cdot\L_Xg-\L_Xg\cdot\alpha.
$$
\end{lemma}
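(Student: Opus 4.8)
The plan is to exploit the fact that the only reason $\L_X$ fails to commute with the index-raising operator $\sharp$ is that $\sharp$ is built from the metric $g$, whose Lie derivative $\L_X g$ does not vanish. Concretely, write $\hat g := \sharp g$ for the inverse metric, regarded as a symmetric $(2,0)$ tensor. Then raising both indices of $\alpha$ is a double metric contraction, $\sharp\alpha = c(\hat g\otimes\hat g\otimes\alpha)$, where $c$ contracts the second slot of each copy of $\hat g$ against the corresponding slot of $\alpha$. I would verify this in a local frame, reading off $(\sharp\alpha)(\theta^i,\theta^j)=g^{ik}g^{jl}\alpha_{kl}$, so that the bookkeeping of which slots are paired is fixed once and for all.

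First I would establish the sub-lemma $\L_X\hat g=-\sharp\L_X g$. This follows by Lie-differentiating the identity $c(g\otimes\hat g)=\id$, where $\id$ is the identity $(1,1)$ tensor; since $\L_X\id=0$ (because $\L_X(\id(Y))=\L_X Y=\id(\L_X Y)$ for every $Y$), the product rule (Proposition \ref{lieprop}(iii)) and commutation of $\L_X$ with contraction (Proposition \ref{lieprop}(iv)) give $c((\L_X g)\otimes\hat g)+c(g\otimes\L_X\hat g)=0$, and raising the free index of the first term solves for $\L_X\hat g$. Equivalently, this is just the $\alpha=g$ case of the lemma, since $g$ is a two-sided unit for the dot product.

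Next I would apply $\L_X$ to $\sharp\alpha=c(\hat g\otimes\hat g\otimes\alpha)$. Using Proposition \ref{lieprop}(iii)--(iv) again gives $\L_X\sharp\alpha=c(\hat g\otimes\hat g\otimes\L_X\alpha)+c((\L_X\hat g)\otimes\hat g\otimes\alpha)+c(\hat g\otimes(\L_X\hat g)\otimes\alpha)$. The first term is exactly $\sharp\L_X\alpha$, so applying $\flat$ returns $\L_X\alpha$, the leading term. For the other two I would substitute $\L_X\hat g=-\sharp\L_X g$ and then apply $\flat$; a short frame computation shows that lowering the term with $\L_X\hat g$ in the first factor contracts $\L_X g$ into the first slot of $\alpha$, producing $-\L_X g\cdot\alpha$, while the term with $\L_X\hat g$ in the second factor produces $-\alpha\cdot\L_X g$. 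Collecting these yields $\flat\L_X\sharp\alpha=\L_X\alpha-\alpha\cdot\L_X g-\L_X g\cdot\alpha$, as claimed.

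The only delicate point is the slot bookkeeping: one must keep track of which index of which copy of $\hat g$ is contracted against which index of $\alpha$, so that the two correction terms emerge as the left and right dot products in the correct order rather than being conflated. Because everything here is algebraic and pointwise, one could alternatively bypass the invariant argument entirely and verify the identity by a direct computation of both sides in a local orthonormal frame using the frame expression for $\L_X$; the invariant route is cleaner, but the component check is the safest guarantee that the dot-product terms are assigned correctly.
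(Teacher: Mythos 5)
Your proof is correct, but it takes a genuinely different route from the paper's. The paper proves the identity by a direct component computation: it expands $\sharp\alpha=\sum_{i,j}\alpha_{ij}\,e_i\otimes e_j$ in a local orthonormal frame, applies the Leibniz rule for $\L_X$ term by term, lowers the indices, and then identifies $(\L_X\alpha)(e_k,e_l)$ and the two dot-product corrections using $(\L_Xg)(e_i,e_k)=-g(\L_Xe_i,e_k)-g(e_i,\L_Xe_k)$ (valid because $g(e_i,e_k)$ is constant in an orthonormal frame). This is precisely the ``safest'' component check you mention at the end of your proposal, and it has the virtue that the slot bookkeeping you flag as the delicate point resolves itself automatically. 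Your invariant argument instead isolates the conceptual content in the sub-lemma $\L_X\sharp g=-\sharp\L_Xg$ (Lie derivative of the inverse metric), obtained by differentiating $c(g\otimes\sharp g)=\id$ with $\L_X\id=0$, and then propagates it through the double contraction $\sharp\alpha=c(\sharp g\otimes\sharp g\otimes\alpha)$ using Proposition \ref{lieprop}(iii)--(iv); your observation that the sub-lemma is exactly the $\alpha=g$ case of the lemma (since $g$ is a unit for the dot product) is a nice internal consistency check. What your route buys is transparency about \emph{why} the correction terms arise --- $\sharp$ fails to commute with $\L_X$ only through the metric --- and it generalizes immediately to raising indices of higher-rank tensors; what the paper's route buys is that it is entirely elementary, uses nothing beyond the frame Leibniz rule, and never requires one to track which copy of $\sharp g$ contracts against which slot of $\alpha$. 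Both proofs are complete and valid.
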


\begin{proof}
Let $\alpha_{ij}=\alpha(e_i,e_j)$. Then, by definition, 
\bea
\sharp\alpha&=&\sum_{i,j}\alpha_{ij}e_i\otimes e_j\nonumber \\
\Rightarrow\quad
\L_X\sharp\alpha&=&\sum_{i,j}\left(X[\alpha_{ij}]e_i\otimes e_j+\alpha_{ij}\L_Xe_i\otimes e_j
+\alpha_{ij}e_i\otimes\L_Xe_j\right).
\eea
Let $\beta=\flat\L_X\sharp\alpha$. Then
\bea
\beta(e_k,e_l)&=&X[\alpha_{kl}]+\sum_i(\alpha_{il}g(\L_Xe_i,e_k)+\alpha_{ki}g(\L_Xe_i,e_l))\nonumber \\
&=&(\L_X\alpha)(e_k,e_l)+\alpha(\L_Xe_k,e_l)+\alpha(e_k,\L_Xe_l)+\sum_i(\alpha_{il}g(\L_Xe_i,e_k)+\alpha_{ki}g(\L_Xe_i,e_l))\nonumber \\
&=&(\L_X\alpha)(e_k,e_l)+\sum_i(\alpha_{il}(g(\L_Xe_i,e_k)+g(e_i,\L_Xe_k))+\alpha_{ki}(g(\L_Xe_i,e_l)+g(e_i,\L_Xe_l))\nonumber \\
&=&(\L_X\alpha)(e_k,e_l)-\sum_i(\alpha_{il}\L_Xg(e_i,e_k)+\alpha_{ki}\L_Xg(e_i,e_l))\nonumber \\
&=&\left(\L_X\alpha-\L_Xg\cdot \alpha-\alpha\cdot\L_Xg\right)(e_k,e_l)
\eea
as was to be proved.
\end{proof}

\begin{prop} Let $\ph:(M,g)\ra (N,h)$ 
be restricted harmonic and $\hess$
be the bilinear form defined in Theorem \ref{secondvar},
$$
\hess:\Gamma_0(TM)\times\Gamma_0(TM)\ra \R,\qquad \hess(X,Y)=\frac12\ip{\L_X\ph^*h,\L_Yg}_{L^2}.
$$
Then $\hess$ is symmetric.
\end{prop}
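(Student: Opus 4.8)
The plan is to reduce the difference $\hess(X,Y)-\hess(Y,X)$ to a single inner product that vanishes by the first variation formula. Writing $\alpha=\ph^*h$ for brevity, I would first use the adjoint identity of Lemma \ref{LXadj} to move $\L_X$ off $\alpha$, obtaining
$$
\ip{\L_X\alpha,\L_Yg}_{L^2}=-\ip{\alpha,\flat\L_X\sharp(\L_Yg)}_{L^2}.
$$
Then I would apply Lemma \ref{th}, with the tensor $\L_Yg$ playing the role of $\alpha$ there, to expand
$$
\flat\L_X\sharp(\L_Yg)=\L_X\L_Yg-(\L_Yg)\cdot\L_Xg-\L_Xg\cdot(\L_Yg).
$$

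The key observation is that the two dot-product terms are symmetric under $X\leftrightarrow Y$: the combination $(\L_Yg)\cdot(\L_Xg)+(\L_Xg)\cdot(\L_Yg)$ is literally unchanged by swapping $X$ and $Y$. Hence, on forming the difference $\hess(X,Y)-\hess(Y,X)=\frac12\big[\ip{\L_X\alpha,\L_Yg}_{L^2}-\ip{\L_Y\alpha,\L_Xg}_{L^2}\big]$, these terms cancel, leaving only
$$
\hess(X,Y)-\hess(Y,X)=-\frac12\ip{\alpha,(\L_X\L_Y-\L_Y\L_X)g}_{L^2}.
$$
By Proposition \ref{lieprop}(e) the commutator of Lie derivatives satisfies $(\L_X\L_Y-\L_Y\L_X)g=\L_{[X,Y]}g$, so the difference collapses to $-\frac12\ip{\ph^*h,\L_{[X,Y]}g}_{L^2}$.

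To finish, I would show that $[X,Y]$ again belongs to $\Gamma_0(TM)$ and invoke Lemma \ref{firstvar2}. Compact support of $[X,Y]$ is immediate, and for the divergence I would use the standard identity $\div[X,Y]=X[\div Y]-Y[\div X]$ (which follows by applying the two Lie derivatives to $\vol_g$ and using $\L_Z\vol_g=(\div Z)\vol_g$); this gives $\div[X,Y]=0$ whenever $\div X=\div Y=0$. Since $\ph$ is restricted harmonic, Lemma \ref{firstvar2} then yields $\ip{\ph^*h,\L_{[X,Y]}g}_{L^2}=0$, so $\hess(X,Y)=\hess(Y,X)$.

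I expect the main subtlety to be the bookkeeping in the first two steps: correctly identifying which tensor plays the role of $\alpha$ in Lemmas \ref{LXadj} and \ref{th}, and confirming that the non-symmetric pieces really do assemble into the single commutator $\L_{[X,Y]}g$ with no residual term. The fact that compactly supported divergenceless vector fields form a Lie subalgebra is routine, but it is the linchpin that allows the first variation formula to annihilate the remaining inner product.
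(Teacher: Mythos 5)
Your proof is correct and follows essentially the same route as the paper: apply Lemma \ref{LXadj} and Lemma \ref{th} to rewrite $\hess(X,Y)$, note the dot-product terms are symmetric in $X,Y$ so the difference reduces via Proposition \ref{lieprop} to $-\frac12\ip{\ph^*h,\L_{[X,Y]}g}_{L^2}$, and kill this with Lemma \ref{firstvar2} since divergenceless compactly supported fields form a Lie subalgebra. Your explicit verification that $\div[X,Y]=0$ is a detail the paper simply asserts, but otherwise the arguments are identical.
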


\begin{proof} By Lemmas \ref{LXadj} and \ref{th},
\bea
\hess(X,Y)&=&-\frac12\ip{\ph^*h,\flat\L_X\sharp\L_Yg}_{L^2}\nonumber \\
&=&-\frac12\ip{\ph^*h,\L_X\L_Yg-\L_Yg\cdot\L_Xg-\L_Xg\cdot\L_Yg}_{L^2}.
\eea
Hence, by Proposition \ref{lieprop},
\beq
\hess(X,Y)-\hess(Y,X)=-\frac12\ip{\ph^*h,\L_{[X,Y]}g}_{L^2}.
\eeq
Now the space of divergenceless vector fields is closed under Lie bracket, so $\ph^*h$ is $L^2$
orthogonal to $\L_{[X,Y]}g$ by Lemma \ref{firstvar2}.
\end{proof}

\section{More general perturbations}
\news
\label{sec-gen}

So far, we have considered near BPS Skyrme models of the form $E=E_{BPS}+\eps E_2$, that is, where the BPS energy functional $E_{BPS}=E_0+E_6$
is perturbed by adding 
a small constant times the Dirichlet energy. This led us to consider the problem of minimizing $E_2$ over symmetry orbits of minimizers of $E_{BPS}$. 
It is interesting to consider more general perturbations of the form $E=E_{BPS}+\eps F$, where $F=E_2+E_4$, for example. In fact, the local theory
developed in section \ref{sec:RHM} generalizes immediately to this setting, provided $F$ is geometrically natural, in the sense of
Definition \ref{netbskir}.

\begin{defn} A smooth map $\ph:(M,g)\ra (N,h)$ is {\em resticted $F$-critical} if $F(\ph)$ is finite and, for all smooth curves $\psi_t$ in $\sdiff(M,g)$ through
$\psi_0=\id_M$,
$$
\frac{d\: }{dt}\bigg|_{t=0}F(\ph\circ\psi_t)=0.
$$
\end{defn}

\begin{thm}\label{firstvarF} Let $F(\ph,g)$ be a geometrically natural functional with stress tensor $S_F$.
A smooth map $\ph:(M,g)\ra (N,h)$ of finite $F$ is restricted $F$-critical if and only if the one-form
$\div S_F$ on $M$ is exact.
\end{thm}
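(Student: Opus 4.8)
The plan is to mirror the proof of Theorem \ref{firstvar} essentially verbatim, substituting the generic stress tensor $S_F$ for the specific Dirichlet stress tensor $S=\frac12|\d\ph|^2g-\ph^*h$. The entire argument of Theorem \ref{firstvar} up to and including the Poincar\'e duality conclusion $[\div S]=0$ relied on only two facts about $E_2$: that it is geometrically natural, and that its variation with respect to $g$ is encoded by a symmetric $(0,2)$ stress tensor. Since $F$ is assumed geometrically natural with stress tensor $S_F$, both ingredients are available, so the chain of equalities should carry over unchanged.

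Concretely, first I would let $\chi_t$ be a curve in $\sdiff(M,g)$ through $\id_M$ with initial velocity $X\in\Gamma_0(TM)$ (a divergenceless vector field of compact support), and replace it by the flow $\psi_t$ of $X$. Then, using that $F$ is geometrically natural,
$$
\frac{d\:}{dt}\bigg|_{t=0}F(\ph\circ\psi_t,g)=\frac{d\:}{dt}\bigg|_{t=0}F(\ph,\psi_{-t}^*g)=\frac12\ip{S_F,-\L_Xg}_{L^2}.
$$
Applying Lemma \ref{gsos} with $\alpha=S_F$, and discarding the total-divergence term $\div(\iota_X S_F)$ because $X$ has compact support, this reduces to $\ip{\flat X,\div S_F}_{L^2}=\int_M(\div S_F)\wedge(*\flat X)$, exactly as in the $E_2$ case.

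The rest of the argument is then a direct transcription. If $\ph$ is restricted $F$-critical, this pairing vanishes for every coexact one-form $\flat X=\dstar\nu$ of compact support, so $\div S_F$ is closed; the nondegeneracy of the Poincar\'e pairing $H^1(M)\times H^{m-1}_c(M)\to\R$ then forces $[\div S_F]=0$, i.e.\ $\div S_F$ is exact. Conversely, if $\div S_F$ is exact, the displayed first-variation formula vanishes for the flow of every $X\in\Gamma_0(TM)$, so $\ph$ is restricted $F$-critical.

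**The one genuinely new point**, and the only place I expect any friction, is that in Theorem \ref{firstvar} the final step converted exactness of $\div S$ into exactness of $\div\ph^*h$ using the special structure $S=\frac12|\d\ph|^2g-\ph^*h$ together with Remark \ref{opatig} (which shows $\div(\frac12|\d\ph|^2g)=\d(\frac12|\d\ph|^2)$ is already exact). For a general $F$ there is no such decomposition, so the cleanest statement is simply the condition on $S_F$ itself; I would therefore phrase the conclusion directly in terms of exactness of $\div S_F$ and not attempt to split off a gradient piece. With that understanding, the proof is complete with no substantive obstacle — it is really a corollary of the observation that Theorem \ref{firstvar}'s proof never used anything specific to $E_2$ beyond geometric naturality.
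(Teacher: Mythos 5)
Your proposal is correct and is exactly what the paper does: its proof of this theorem is simply ``follows \emph{mutatis mutandis} the proof of Theorem \ref{firstvar}, with $E_2$ replaced by $F$,'' which is the transcription you carried out. You also correctly identified that the only $E_2$-specific step (splitting off the exact piece $\d(\frac12|\d\ph|^2)$ via Remark \ref{opatig}) is not needed here, since the conclusion is stated directly in terms of $\div S_F$.
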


\begin{proof} Follows {\it mutatis mutandis} the proof of Theorem \ref{firstvar}, with $E_2$ replaced by $F$.
\end{proof}

As for $E_2$, we can analyze the stability of restricted $F$-critical maps by computing the second variation
\beq
\hess_F(X,Y)=\frac{\cd^2\: \: }{\cd s\cd t}\bigg|_{s=t=0}F(\ph\circ\psi_{s,t})
\eeq
for any two-parameter variation $\psi_{s,t}$ of $\id_M$ in $\sdiff_0(M,g)$ tangent to $X,Y\in\Gamma_0(TM)$. The resulting
formula for $\hess_F$ depends on the details of $S_F$. 

To illustrate, consider the case $N=G=SU(2)$ and $F=E_4$, the Skyrme term, 
\beq
E_4=\frac12\|\ph^*\omega\|_{L^2}^2,
\eeq
where $\omega$ is the $\g=\su(2)$-valued two-form on $G$
\beq
\omega(X,Y)=[\mu(X),\mu(Y)]_\g,
\eeq
and $\mu\in\Omega^1(G)\otimes\g$ is the left Maurer-Cartan form. This has
 stress tensor \cite{spe-crystals}
\beq
S_{E_4}(\ph,g)=\frac12|\ph^*\omega|^2g+\ph^*\omega\cdot\ph^*\omega,
\eeq
where the final term denotes the (real valued) symmetric $(0,2)$ tensor 
\beq
\ph^*\omega\cdot\ph^*\omega(X,Y)=\sum_i \ip{\ph^*\omega(X,e_i),\ph^*\omega(e_i,Y)}_\g.
\eeq
(For an alternative characterization of $E_4$ and its stress tensor, which avoids using the lie group structure of the target space, see
\cite{slo}.)
A Skyrme field $\ph:M\ra N$ is restricted $E_4$-critical if and only if $\div\, S_F$ is exact, and hence, if and only if $\div(\ph^*\omega\cdot\ph^*\omega)$
is exact.

For example, let $\ph:\R^3\ra N$ be a hedgehog field
\beq
\ph(r,n)=\cos f(r)\I_2+i\sin f(r) n\cdot \tau.
\eeq
We can identify $\g$ with $\R^3$ given the Lie bracket $[u,v]=-2u\times v$, where $\times$ denotes vector product. Then
\bea
\ph^*\mu(\cd/\cd r)&=&f'(r)n\nonumber\\
\ph^*\mu(u)&=&\sin f(\cos f u+\sin f n\times u)\qquad\mbox{for all $u\in T_nS^2$},
\eea
and hence
\bea
\ph^*\omega(\cd/\cd r, u)&=&-2f'\sin f(-\sin f u+\sin f n\times u)\nonumber \\
\label{kelannwat}
\ph^*\omega(u,n\times u)&=&-2\sin^2f|u|^2 n.
\eea
It follows that
\bea
\ph^*\omega\cdot\ph^*\omega&=&-\frac{8}{r^2}(f'\sin f)^2dr^2-4\sin^2f\left\{(f')^2+\frac{1}{r^2}\sin^2f\right\}g_{S^2}\nonumber\\
&=&p(r)dr^2+q(r)g
\eea
for some functions $p(r),q(r)$ of $r$ only. This has closed, hence exact, divergence.
Hence, every hedgehog field is both restricted harmonic and restricted $E_4$-critical, so restricted $(\alpha E_2+(1-\alpha) E_4)$-critical for all $\alpha\in[0,1]$.

Note that if $\ph$ is inverse stereographic projection then $f'=r^{-1}\sin f$, so 
\beq\label{keanwa}
\ph^*\omega\cdot\ph^*\omega=-8f'(r)^4g.
\eeq
 We can use this simplification to
show that inverse stereographic projection is a {\em stable} restricted $(\alpha E_2+(1-\alpha) E_4)$-critical map for all $\alpha\in[0,1]$. For this, we need
the hessian associated with the Skyrme energy $E_4$.

\begin{prop} Let $\ph:(M,g)\ra N$ be restricted $E_4$-critical. Then the hessian of $E_4$ at $\ph$ is 
$$
\hess_{E_4}(X,Y)=-\frac12\ip{\L_X(\ph^*\omega\cdot\ph^*\omega),\L_Yg}_{L^2}+\frac12\ip{\ph^*\omega\cdot\L_Xg,\L_Yg\cdot\ph^*\omega}_{L^2}
$$
where $\ph^*\omega\cdot\L_Xg$ denotes the $\g$-valued bilinear form 
$$
(\ph^*\omega\cdot\L_Xg)(A,B)=\sum_i\ph^*\omega(A,e_i)(\L_Xg)(e_i,B),
$$
and similarly for $\L_Yg\cdot\ph^*\omega$.
\end{prop}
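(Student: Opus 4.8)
The plan is to imitate the proof of Theorem~\ref{secondvar} as closely as possible, isolating the one place where the richer structure of $S_{E_4}$ produces a genuinely new term. As there, let $\psi_s,\chi_t$ be the flows of $X,Y\in\Gamma_0(TM)$, take the two-parameter variation $\psi_{s,t}=\psi_s\circ\chi_t$, and put $\ph_s=\ph\circ\psi_s$. Since $E_4$ is geometrically natural, $E_4(\ph_s\circ\chi_t,g)=E_4(\ph_s,\chi_{-t}^*g)$, and differentiating in $t$ through the stress tensor (exactly as in the proof of Theorem~\ref{secondvar}) gives $\cd_t|_{t=0}E_4(\ph_s,\chi_{-t}^*g)=-\frac12\ip{S_{E_4}(\ph_s,g),\L_Yg}_{L^2}$. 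Hence
\[
\hess_{E_4}(X,Y)=\frac{d}{ds}\bigg|_{s=0}\left(-\frac12\ip{S_{E_4}(\ph_s,g),\L_Yg}_{L^2}\right).
\]

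Next I would differentiate $S_{E_4}(\ph_s,g)=\frac12|\ph_s^*\omega|^2g+\ph_s^*\omega\cdot\ph_s^*\omega$ in $s$ at $s=0$. Writing $\sigma:=\ph^*\omega$ and using $\ph_s^*\omega=\psi_s^*\sigma$, so that $\cd_s|_{s=0}\ph_s^*\omega=\L_X\sigma$, the first summand contributes $\tilde fg$ for some function $\tilde f$, which is pointwise orthogonal to $\L_Yg$ by Remark~\ref{netski} (because $\div Y=0$) and therefore drops out, just as the $\frac12|\d\ph|^2g$ term did for $E_2$. Since the dot product is bilinear and evaluated at the \emph{fixed} metric $g$, the second summand contributes
\[
D:=\cd_s|_{s=0}(\ph_s^*\omega\cdot\ph_s^*\omega)=(\L_X\sigma)\cdot\sigma+\sigma\cdot(\L_X\sigma),
\]
and so $\hess_{E_4}(X,Y)=-\frac12\ip{D,\L_Yg}_{L^2}$.

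The crux of the proof, and the step I expect to be the main obstacle, is that $D$ is \emph{not} equal to $\L_X(\sigma\cdot\sigma)$: the dot product contracts the inner index using the metric $g$, and the Lie derivative of the fully contracted $(0,2)$ tensor differentiates that metric as well. To compute the discrepancy I would regard $\sigma\cdot\sigma$ as a contraction of $\sigma\otimes g^{-1}\otimes\sigma$ (the fixed pairing $\ip{\cdot,\cdot}_\g$ being a mere coefficient contraction untouched by $\L_X$) and apply $\L_X$ as a derivation commuting with contraction, as in Proposition~\ref{lieprop}. The two hits on the $\sigma$ factors reproduce $D$, while the hit on $g^{-1}$, together with $\L_Xg^{-1}=-g^{-1}(\L_Xg)g^{-1}$, produces exactly $-\sigma\cdot\L_Xg\cdot\sigma$, the symmetric real $(0,2)$ tensor $(\sigma\cdot\L_Xg\cdot\sigma)(P,Q)=\sum_{i,j}\ip{\sigma(P,e_i),\sigma(e_j,Q)}_\g(\L_Xg)(e_i,e_j)$ (symmetry following from the antisymmetry of the two-form $\sigma$). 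This is the $\g$-valued analogue of the metric-correction identity in Lemma~\ref{th}, and it yields
\[
\L_X(\ph^*\omega\cdot\ph^*\omega)=D-\ph^*\omega\cdot\L_Xg\cdot\ph^*\omega,\qquad\mbox{so}\qquad D=\L_X(\ph^*\omega\cdot\ph^*\omega)+\ph^*\omega\cdot\L_Xg\cdot\ph^*\omega.
\]

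Finally I would substitute this into $\hess_{E_4}(X,Y)=-\frac12\ip{D,\L_Yg}_{L^2}$, which leaves the desired $-\frac12\ip{\L_X(\ph^*\omega\cdot\ph^*\omega),\L_Yg}_{L^2}$ together with $-\frac12\ip{\ph^*\omega\cdot\L_Xg\cdot\ph^*\omega,\L_Yg}_{L^2}$, and then verify the bookkeeping identity
\[
\ip{\ph^*\omega\cdot\L_Xg\cdot\ph^*\omega,\L_Yg}_{L^2}=-\ip{\ph^*\omega\cdot\L_Xg,\L_Yg\cdot\ph^*\omega}_{L^2}.
\]
This is a pointwise tensor-algebra fact: expanding $\ph^*\omega=\sum_a A^a\otimes T_a$ in a $\ip{\cdot,\cdot}_\g$-orthonormal basis $\{T_a\}$ of $\g$ and working in a local orthonormal frame, the left side equals $\sum_a\tr(A^a\,\L_Xg\,A^a\,\L_Yg)$ whereas the right side equals $-\sum_a\tr(A^a\,\L_Xg\,A^a\,\L_Yg)$, the sign difference arising solely from transposing the antisymmetric matrix $A^a$ across the trace (here $\L_Xg,\L_Yg$ are symmetric). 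Substituting converts the unwanted term into $+\frac12\ip{\ph^*\omega\cdot\L_Xg,\L_Yg\cdot\ph^*\omega}_{L^2}$, giving precisely the stated formula for $\hess_{E_4}$.
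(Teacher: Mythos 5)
Your proof is correct and follows essentially the same route as the paper: reduce to differentiating the stress tensor via geometric naturality, discard the trace part by Remark \ref{netski}, express $(\L_X\ph^*\omega)\cdot\ph^*\omega+\ph^*\omega\cdot(\L_X\ph^*\omega)$ as $\L_X(\ph^*\omega\cdot\ph^*\omega)$ plus the metric-correction term coming from differentiating the inverse metric in the contraction, and flip the sign of that term using antisymmetry of $\ph^*\omega$. The only difference is presentational: where the paper compresses the last step into the remark that ``the sign change is caused by antisymmetry,'' you verify it explicitly with the trace computation, which is a welcome elaboration rather than a deviation.
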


\begin{proof} Let $X,Y\in\Gamma_0(TM)$, $\psi_s$, $\chi_t$ be their 
flows, $\psi_{s,t}=\psi_s\circ\chi_t$ and $\ph_s=\ph\circ\psi_s$, as in the proof of Theorem \ref{secondvar}.
Then
\bea
\hess_{E_4}(X,Y)&=&\frac{\cd^2 E_4(\ph\circ\psi_{s,t},g)}{\cd s\: \cd t}\bigg|_{s=t=0}
=\frac{\cd^2 E_4(\ph_s\circ\chi_t,g)}{\cd s\: \cd t}\bigg|_{s=t=0}\nonumber\\
&=&\frac{\cd\:}{\cd s}\bigg|_{s=0}\frac{\cd\:}{\cd t}\bigg|_{t=0} E_4(\ph_s,\chi_{-t}^*g)\qquad
\mbox{(since $E_4$ is geometrically natural)}\nonumber\\
&=&\frac{d\: }{ds}\bigg|_{s=0}\frac12\ip{S_{E_4}(\ph_s,g),-\L_Yg}_{L^2}.
\eea
Now
\bea
S_{E_4}(\ph_s,g)&=&\frac12|\ph_s^*\omega|^2g+\ph_s^*\omega\cdot\ph_s^*\omega\nonumber \\
\Rightarrow\quad
\frac{d\: }{ds}\bigg|_{s=0}S_{E_4}(\ph_s,g)&=&fg+\frac{d\: }{ds}\bigg|_{s=0}[\psi_s^*(\ph^*\omega)\cdot\psi_s^*(\ph^*\omega)]\nonumber \\
&=&fg+(\L_X\ph^*\omega)\cdot\ph^*\omega+\ph^*\omega\cdot(\L_X\ph^*\omega),
\eea
and, by Remark \ref{netski}, $fg$ is pointwise orthogonal to $\L_Yg$ (since $\div Y=0$), so
\beq
\hess_{E_4}(X,Y)=-\frac12\ip{(\L_X\ph^*\omega)\cdot\ph^*\omega+\ph^*\omega\cdot(\L_X\ph^*\omega),\L_Yg}_{L^2}.
\eeq
Now
\bea
\L_X(\ph^*\omega\cdot\ph^*\omega)&=&\L_X\, c(\ph^*\omega\otimes\sharp g\otimes\ph^*\omega)\nonumber \\
&=&c(\L_X\ph^*\omega\otimes\sharp g\otimes\ph^*\omega+\ph^*\omega\otimes\L_X\sharp g\ph^*\omega+\ph^*\omega\otimes\sharp g\otimes\L_X\ph^*\omega)\nonumber\\
&=&(\L_X\ph^*\omega)\cdot\ph^*\omega-(\ph^*\omega\cdot\L_X g)\cdot\ph^*\omega+\ph^*\omega\cdot(\L_X\ph^*\omega),
\eea
where $c:T^2_4M\ra T^0_2M$ denotes the requisite contraction map. Hence
\bea
\hess_{E_4}(X,Y)&=&-\frac12\ip{\L_X(\ph^*\omega\cdot\ph^*\omega)+(\ph^*\omega\cdot\L_X g)\cdot\ph^*\omega,\L_Yg}_{L^2}\nonumber\\
&=&-\frac12\ip{\L_X(\ph^*\omega\cdot\ph^*\omega),\L_Yg}_{L^2}+\frac12\ip{\ph^*\omega\cdot\L_X g,\L_Yg\cdot\ph^*\omega}_{L^2},
\eea
where the sign change in the final term is caused by the antisymmetry of the bilinear form $\ph^*\omega$.
\end{proof}

\begin{prop}\label{kawrmc}
 Inverse stereographic projection $\ph:\R^3\ra S^3=SU(2)$ is restricted $E_4$-stable.
\end{prop}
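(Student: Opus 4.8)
The plan is to evaluate the hessian formula from the preceding Proposition on the diagonal, computing $\hess_{E_4}(X,X)$ for an arbitrary $X\in\Gamma_0(T\R^3)$, and to show the result is a manifestly non-negative integral. Write $\Omega=\ph^*\omega$ and $L=\L_Xg$ throughout; recall from Remark \ref{netski} that $\div X=0$ forces $\ip{g,L}_g=2\div X=0$, so $L$ is pointwise trace-free and pointwise orthogonal to $g$. Since inverse stereographic projection is conformal, equation (\ref{keanwa}) gives $\Omega\cdot\Omega=-8(f')^4g=:\lambda g$ with $\lambda\le 0$. The first hessian term is then easy: expanding $\L_X(\lambda g)=X[\lambda]g+\lambda L$ and discarding the $X[\lambda]g$ piece (orthogonal to $L$) leaves
\[
-\frac12\ip{\L_X(\Omega\cdot\Omega),L}_{L^2}=-\frac12\int_{\R^3}\lambda|L|^2\vol_g=4\int_{\R^3}(f')^4|L|^2\vol_g\ \ge\ 0.
\]

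The substance is the second term, $\frac12\ip{\Omega\cdot L,\,L\cdot\Omega}_{L^2}$, and here conformality must be used in an essential way. First I would record the pointwise algebraic structure of $\Omega$ in the orthonormal frame $e_1=\cd_r$, $e_2=r^{-1}\cd_\theta$, $e_3=(r\sin\theta)^{-1}\cd_\phi$. Setting $\mu_a=\ph^*\mu(e_a)$, the formulae for $\ph^*\mu$ displayed above (\ref{kelannwat}), together with $f'=r^{-1}\sin f$, show that $\{\mu_1,\mu_2,\mu_3\}$ is an orthogonal frame of $\g$ with common length $f'$; writing $\hat\mu_a=\mu_a/f'$ for the associated orthonormal frame and using $[u,v]=-2u\times v$, one finds
\[
\Omega(e_a,e_b)=[\mu_a,\mu_b]=\kappa\sum_c\epsilon_{abc}\hat\mu_c,\qquad \kappa^2=4(f')^4.
\]
This is the one genuinely map-specific input, and it is exactly what conformality buys: equal stretching in every direction is what makes the $\mu_a$ orthogonal and of common length, hence $\Omega$ of this rotationally symmetric ``epsilon'' form.

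With this in hand the second term is a short index computation. Writing $\Omega_{ai}=\Omega(e_a,e_i)$ and $L_{ij}=L(e_i,e_j)$, so that $(\Omega\cdot L)_{ab}=\sum_i\Omega_{ai}L_{ib}$ and $(L\cdot\Omega)_{ab}=\sum_i L_{ai}\Omega_{ib}$, I would substitute the structural form and contract using $\sum_c\epsilon_{aic}\epsilon_{jbc}=\delta_{aj}\delta_{ib}-\delta_{ab}\delta_{ij}$ to obtain the pointwise identity
\[
\ip{\Omega\cdot L,\,L\cdot\Omega}_\g=\kappa^2\big((\tr L)^2-|L|^2\big).
\]
Since $\tr L=2\div X=0$, this equals $-\kappa^2|L|^2=-4(f')^4|L|^2$, so the second hessian term equals $-2\int_{\R^3}(f')^4|L|^2\vol_g$. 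Adding the two contributions, the partial cancellation leaves
\[
\hess_{E_4}(X,X)=2\int_{\R^3}(f')^4\,|\L_Xg|^2\,\vol_g\ \ge\ 0,
\]
with equality if and only if $\L_Xg=0$ (as $f'=-2/(1+r^2)$ never vanishes), i.e.\ iff $X$ is Killing. This proves stability.

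The main obstacle is the middle step: pinning down the precise form $\Omega(e_a,e_b)=\kappa\sum_c\epsilon_{abc}\hat\mu_c$, including the correct orientation in the identification $\g\cong\R^3$, $[u,v]=-2u\times v$, and the constant $\kappa^2=4(f')^4$. Everything downstream is forced, but this is where the conformality hypothesis is indispensable: for a general restricted $E_4$-critical map $\Omega\cdot\Omega$ is not a scalar multiple of $g$, and the two hessian terms have no reason to combine into a single signed integral.
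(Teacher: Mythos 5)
Your proof is correct and follows essentially the same route as the paper: both evaluate the two terms of $\hess_{E_4}(X,X)$ pointwise in the frame $e_1=\cd_r$, $e_2=u/r$, $e_3=(n\times u)/r$, establish the key identity $\ip{\ph^*\omega\cdot\L_Xg,\L_Xg\cdot\ph^*\omega}=-4f'(r)^4|\L_Xg|^2$ using tracelessness of $\L_Xg$, and combine with equation (\ref{keanwa}) and Remark \ref{netski} to obtain $\hess_{E_4}(X,X)=2\|f'(r)^2\L_Xg\|_{L^2}^2\geq 0$. The only difference is organizational: the paper derives the key identity by writing the components $\xi_a$ of $\ph^*\omega$ as explicit matrices in the fixed basis $\{n,u,n\times u\}$ of $\g$ and computing $-\sum_a\tr(\xi_a m\xi_a m)$ directly, whereas you work in the map-adapted orthonormal frame $\hat\mu_a=\mu_a/|f'|$, in which $\ph^*\omega$ takes the pure epsilon form $\Omega(e_a,e_b)=\kappa\sum_c\epsilon_{abc}\hat\mu_c$, and contract via the $\epsilon$--$\delta$ identity --- a slightly slicker packaging of the same computation.
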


\begin{proof} We must show that $\hess_{E_4}(X,X)\geq 0$ for all $X\in\Gamma_0(TM)$. Choose and fix $X\in\Gamma_0(TM)$.
At a given point $rn\in\R^3\less\{0\}$, where $r>0$ and
$n\in S^2$, choose and fix a unit vector $u\in T_nS^2$. Then, at $rn$ we can use the triple $e_1=\cd/\cd r$, $e_2=u/r$, $e_3=(n\times u)/r$ as
an orthonormal basis for $T_{rn}\R^3$, and $\eps_1=n$, $\eps_2=u$, $\eps_3=n\times u$ as an orthonormal basis for the Lie algebra
$\g=\su(s)=(\R^3,-2\times)$. Let $m_{ij}=(\L_Xg)(e_i,e_j)$, and note that, since $\div X=0$,  the matrix $m$ is traceless. Let 
$\xi_a(A,B)=\ip{\eps_a,\ph^*\omega(A,B)}_\g$ for all $A,B\in T_{rn}\R^3$, where $a=1,2,3$. Then, for example,
\beq
\xi_1(e_2,e_3)=n\cdot\ph^*\omega(u/r,n\times u/r)=-2r^{-2}\sin^2f=-2f'(r)^2
\eeq
by equation (\ref{kelannwat}). Computing all components $\xi_a(e_i,e_j)$ similarly, one finds that,
relative to the frame $\{e_1,e_2,e_3\}$ for $T_{rn}\R^3$,
\bea
\xi_1&=&2 f'(r)^2\left(\begin{array}{ccc}
0&0&0\\0&0&-1\\0&1&0
\end{array}\right),\nonumber\\
\xi_2&=&2 f'(r)^2\left(\begin{array}{ccc}
0&\sin f(r)&\cos f(r)\\-\sin f(r)&0&0\\-\cos f(r)&0&0
\end{array}\right),\nonumber\\
\xi_3&=&2 f'(r)^2\left(\begin{array}{ccc}
0&-\cos f(r)&\sin f(r)\\ \cos f(r)&0&0\\-\sin f(r)&0&0
\end{array}\right).\qquad
\eea
Hence, at the point $rn$,
\bea
\ip{\ph^*\omega\cdot\L_Xg,\L_Xg\cdot\ph^*\omega}&=&
\sum_a\sum_{i,j}(\xi_a\cdot\L_Xg)(e_i,e_j)(\L_Xg\cdot\xi_a)(e_i,e_j)\nonumber \\
&=&
\sum_{a,i,j,k,l}\xi_a(e_i,e_k)\L_Xg(e_k,e_j)\L_Xg(e_i,e_l)\xi_a(e_l,e_j)\nonumber\\
&=&-\sum_a\tr(\xi_am\xi_am)\nonumber\\
&=&-8f'(r)^4(m_{12}^2+m_{23}^2+m_{13}^2-m_{11}m_{22}-m_{22}m_{33}-m_{33}m_{11}).
\nonumber 
\eea
Comparing this (at the point $rn$) with
\beq
|\L_Xg|^2=2(m_{12}^2+m_{23}^2+m_{13}^2)+m_{11}^2+m_{22}^2+m_{33}^2
\eeq
and recalling that $m_{11}+m_{22}+m_{33}=0$, we see that, at the point $rn$,
\beq\label{kaw}
\ip{\ph^*\omega\cdot\L_Xg,\L_Xg\cdot\ph^*\omega}=-4f'(r)^4|\L_Xg|^2.
\eeq
Since the point $rn$ was arbitrary, (\ref{kaw}) holds on all $\R^3\less\{0\}$, and hence, by continuity on all $\R^3$. Recall (equation (\ref{keanwa}))
that $\ph^*\omega\cdot\ph^*\omega=-8f'(r)^4g$. Hence
\bea
\hess_{E_4}(X,X)&=&-\frac12\ip{\L_X(-8f'(r)^4g),\L_Xg}_{L^2}+\frac12\int_{\R^3}(-4f'(r)^4|\L_Xg|^2)\vol_g\nonumber\\
&=&2\|f'(r)^2\L_Xg\|_{L^2}^2
\eea
where we have, once again, used the fact that $\L_Xg$ is pointwise orthogonal to $g$ (Remark \ref{netski}).
\end{proof}

\begin{cor} Let $F_\alpha=\alpha E_2+(1-\alpha)E_4$. Then inverse stereographic projection $\R^3\ra S^3=SU(2)$ is a stable
restricted $F_\alpha$-critical map for all $\alpha\in[0,1]$.
\end{cor}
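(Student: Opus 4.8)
The plan is to combine the two stability results already established in the excerpt by exploiting the linearity of the stress tensor in the defining functional. First I would observe that both constituent functionals are geometrically natural, so the first-variation theory of Theorem \ref{firstvarF} applies to $F_\alpha$, and the stress tensor is linear: $S_{F_\alpha}=\alpha S_{E_2}+(1-\alpha)S_{E_4}$. Consequently the hessian is likewise linear in the two pieces,
\beq
\hess_{F_\alpha}(X,X)=\alpha\,\hess_{E_2}(X,X)+(1-\alpha)\,\hess_{E_4}(X,X),
\eeq
because the second-variation construction depends on $F$ only through $\frac{d}{ds}|_{s=0}S_F(\ph_s,g)$, which splits additively.

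Next I would record that inverse stereographic projection $\ph:\R^3\ra S^3$ is restricted $F_\alpha$-critical for every $\alpha\in[0,1]$. This is immediate: the computation in Example \ref{barebell} shows $\ph$ is restricted harmonic (it is weakly conformal, being inverse stereographic projection with $E_2=6\pi^2<\infty$, so Corollary \ref{prscje} applies), and the hedgehog calculation culminating in equation (\ref{keanwa}) shows $\div S_{E_4}$ is exact, so $\ph$ is restricted $E_4$-critical by Theorem \ref{firstvarF}. Exactness of $\div S_{F_\alpha}=\alpha\,\div S_{E_2}+(1-\alpha)\,\div S_{E_4}$ then follows by linearity, so criticality holds for the convex combination.

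The stability estimate is then a one-line convex combination of the two nonnegative expressions already computed. From Corollary \ref{keanwarimyco} (with $\ph^*h=f'(r)^2g$ for inverse stereographic projection, since $f'=r^{-1}\sin f$) we have $\hess_{E_2}(X,X)=\|f'(r)^2\L_Xg\|_{L^2}^2\geq 0$, and from Proposition \ref{kawrmc} we have $\hess_{E_4}(X,X)=2\|f'(r)^2\L_Xg\|_{L^2}^2\geq 0$. Hence
\beq
\hess_{F_\alpha}(X,X)=\alpha\|f'(r)^2\L_Xg\|_{L^2}^2+2(1-\alpha)\|f'(r)^2\L_Xg\|_{L^2}^2=(2-\alpha)\|f'(r)^2\L_Xg\|_{L^2}^2\geq 0
\eeq
for all $\alpha\in[0,1]$, since $2-\alpha\geq 1>0$. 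This is manifestly nonnegative, establishing stability.

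I do not anticipate a serious obstacle, since every ingredient is furnished by the preceding results; the only point requiring a moment's care is justifying the additivity of the hessian. The cleanest way is to check that the conformal factors $f$ arising from $\frac{d}{ds}|_{s=0}(\frac12|\d\ph|^2 g)$ and $\frac{d}{ds}|_{s=0}(\frac12|\ph^*\omega|^2 g)$ are each pointwise orthogonal to $\L_Yg$ by Remark \ref{netski} (as used in both Theorem \ref{secondvar} and Proposition \ref{kawrmc}), so they drop out of each piece separately and the remaining terms add linearly; no cross terms appear because the second variation is computed from the first-order-in-$s$ variation of a stress tensor that is itself a linear combination.
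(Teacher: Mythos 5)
Your proposal is correct and takes essentially the same route as the paper, whose entire proof is the single line ``Follows immediately from Corollary \ref{keanwarimyco} and Proposition \ref{kawrmc}''; you have simply made explicit the linearity of the stress tensor and hessian in $F$ that the paper leaves implicit. One small slip worth noting: since $\ph^*h=f'(r)^2g$, Corollary \ref{keanwarimyco} gives $\hess_{E_2}(X,X)=\|f'(r)\,\L_Xg\|_{L^2}^2$ (first power of $f'$), not $\|f'(r)^2\L_Xg\|_{L^2}^2$, so the two hessians cannot be merged into the single term $(2-\alpha)\|f'(r)^2\L_Xg\|_{L^2}^2$ --- but as each summand is separately nonnegative, your conclusion is unaffected.
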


\begin{proof} Follows immediately from Corollary \ref{keanwarimyco} and Proposition \ref{kawrmc}.
\end{proof}

\section{Minimizing over $SL(k,\R)$}
\news
\label{sec-finite}

We have seen (Example \ref{barebell}) that the axially symmetric BPS skyrmions $\ph_B$ used in \cite{adasanwer,adasanwer2,adanaysanwer,bonmar,bonharmar} are not 
restricted harmonic (for $B\geq 2$), 
so do not minimize $E_2$ in their $\sdiff$ orbit. It is not clear how to construct the actual minimizer, or even
whether a minimizer exists. In this section we solve a finite-dimensional version of this variational problem, by minimizing $E_2$ over the
group of {\em linear} volume preserving diffeomorphisms of $\R^3$, $SL(3,\R)$.\footnote{Strictly speaking, $SL(3,\R)$ is not a subgroup of $\sdiff(\R^3)$, since
linear maps do not have compact support. However, given the boundary behaviour of $\ph_B$, its $SL(3,\R)$ orbit lies in the closure of its $\sdiff$ orbit for any 
reasonable choice of topology on the space of smooth maps $\R^3\ra SU(2)$.} We formulate the problem for a general map $\ph:\R^k\ra N$ with sufficiently good
boundary behaviour.

Given a fixed map $\ph:\R^k\ra N$, denote by $M$ its ``average strain matrix,'' the $k\times k$ matrix with entries
\beq
M_{ij}=\int_{\R^k}\ph^*h(\cd_i,\cd_j)d^kx.
\eeq

\begin{prop}\label{opti}
Let $\ph:\R^k\ra N$ be a nonconstant map with $\ph(x)\ra\ph_\infty$, constant, as $|x|\ra\infty$ sufficiently fast that its
average strain matrix $M$ is finite. Let $E_\ph:SL(k,\R)\ra\R$ be the function $E_\ph(A)=E_2(\ph\circ A)$. 
Then
$$
E_\ph(A)\geq\frac{k}{2}(\det M)^{1/k}
$$
with equality if and only if $A^TMA=\mu\I_k$ for some $\mu>0$, and this equality is always attained.
\end{prop}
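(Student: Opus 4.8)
The plan is to reduce the minimization of $E_\ph$ over $SL(k,\R)$ to an algebraic statement about the symmetric positive-definite matrix $A^TMA$, then apply the AM--GM inequality. First I would compute how $E_2(\ph\circ A)$ depends on $A$. Writing $\xv=A\yv$ so that $d^kx=\det A\, d^ky=d^ky$ (since $\det A=1$), and using $\d(\ph\circ A)=\d\ph\circ A$, the Dirichlet energy becomes
\beq
E_\ph(A)=\frac12\int_{\R^k}|\d(\ph\circ A)|^2\,d^kx
=\frac12\sum_{i}\int_{\R^k}\ph^*h(A\cd_i,A\cd_i)\,d^ky
=\frac12\tr\!\left(A^TMA\right),
\eeq
using the definition of the average strain matrix $M$ and the fact that the change of variables is volume preserving. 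So the problem is precisely to minimize $\tfrac12\tr(A^TMA)$ over $A\in SL(k,\R)$.

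Next I would analyze this trace. Since $\ph$ is nonconstant, $M$ is symmetric positive-definite, so it has positive eigenvalues and $\det M>0$. The matrix $B:=A^TMA$ is symmetric positive-definite with $\det B=(\det A)^2\det M=\det M$ fixed. Thus the task is to minimize $\tfrac12\tr B$ over all symmetric positive-definite $B$ with the fixed determinant $\det M$. Letting $\lambda_1,\dots,\lambda_k>0$ be the eigenvalues of $B$, the AM--GM inequality gives
\beq
\frac12\tr B=\frac12\sum_i\lambda_i\geq\frac{k}{2}\left(\prod_i\lambda_i\right)^{1/k}=\frac{k}{2}(\det B)^{1/k}=\frac{k}{2}(\det M)^{1/k},
\eeq
with equality if and only if all $\lambda_i$ are equal, i.e. $B=\mu\I_k$ for some $\mu>0$. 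This establishes the bound and identifies the equality condition as $A^TMA=\mu\I_k$.

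Finally I would verify that equality is actually attained, i.e. that some $A\in SL(k,\R)$ solves $A^TMA=\mu\I_k$. Since $M$ is symmetric positive-definite, it has a symmetric positive-definite square root $M^{1/2}$, and I can take $A=c\,M^{-1/2}$ for a scalar $c>0$ chosen so that $\det A=1$; explicitly $c=(\det M)^{1/(2k)}$, which is positive because $\det M>0$. Then $A^TMA=c^2 M^{-1/2}MM^{-1/2}=c^2\I_k=\mu\I_k$ with $\mu=c^2=(\det M)^{1/k}>0$, and $\det A=c^k(\det M)^{-1/2}=1$ by the choice of $c$. Thus the minimizer exists and the bound is sharp. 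The only genuinely substantive step is the change-of-variables reduction $E_\ph(A)=\tfrac12\tr(A^TMA)$; everything after that is the elementary AM--GM argument together with the explicit construction of the extremal $A$, so I do not anticipate a serious obstacle — the main point to be careful about is confirming that $M$ is positive-definite (using nonconstancy of $\ph$) so that $M^{1/2}$ and the scalar $c$ are well defined.
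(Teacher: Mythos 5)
Your proof is correct and follows essentially the same route as the paper's: reduce to $E_\ph(A)=\frac12\tr(A^TMA)$ via the volume-preserving change of variables, apply AM--GM to the eigenvalues of $A^TMA$ (whose determinant is fixed at $\det M$), and exhibit an explicit minimizer --- your $A=(\det M)^{1/(2k)}M^{-1/2}$ is the paper's $A_0=(\det M)^{1/2k}O_MD_M$ up to right-multiplication by an orthogonal matrix. The one step you defer, positive definiteness of $M$, needs the boundary condition as well as nonconstancy (if $Mv=0$ then $\d\ph(v)\equiv 0$, so $\ph$ is constant along lines parallel to $v$, hence $\equiv\ph_\infty$ by the decay assumption, a contradiction), which is exactly how the paper opens its proof.
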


\begin{proof}
Note that $M$ is symmetric and non-negative. If $Mv=0$ with $v\neq 0$ then $\d\ph(v)=0$ everywhere, so $\ph$ is constant in the direction of $v$, hence
constant (by the boundary condition), which is false by assumption. Hence $M$ is positive definite. Now
 the Dirichlet energy of $\ph\circ A$ is
\beq
E_2(\ph\circ A)=
\frac12\int_{\R^k}\sum_iA^*\ph^*h(\cd_i,\cd_i)d^kx=
\frac12\int_{\R^k}\sum_i \ph^*h(\sum_jA_{ji}\cd_i,\sum_jA_{ji}\cd_i)d^kx=\frac12\tr(A^TMA).
\eeq
Hence, by the AM-GM inequality applied to the eigenvalues of $A^TMA$, 
\beq
E_2(\ph\circ A)\geq \frac{k}{2}(\det A^TMA)^{1/k}=\frac{k}{2}(\det M)^{1/k}
\eeq
with equality if and only if $A^TMA=\mu\I_k$ for some $\mu\geq 0$. It remains to show that equality always occurs.
 Let $\{e_i,\: :\: i=1,\ldots,k\}$ be an oriented orthonormal basis of eigenvectors of $M$, with eigenvalues $\lambda_i^2>0$, $O_M$ be
the $SO(k)$ matrix with columns $(e_1,\ldots,e_k)$ and $D_M=\diag(\lambda_1^{-1},\ldots,\lambda_k^{-1})$. Then
\beq\label{A0}
A_0=(\det M)^{1/2k}O_MD_M
\eeq
is in $SL(k,\R)$ and 
$A_0^TMA_0=(\det M)^{1/k}\I_k$.
\end{proof}

\begin{remark} The condition for $\ph$ to minimize $E_2$ over its $SL(k,\R)$ orbit ($M=\mu\I_k$) is precisely the condition that the
trace-free part of its stress tensor be $L^2$ orthogonal to all variations of the metric $g$ through constant coefficient
metrics \cite{spe-crystals}. Alternatively, it is the condition that $\ph$ should satisfy all the extended Derrick
identities \cite{man-der} for the energy $E_0+\eps E_2+E_6$ except the basic one, generated by dilations of $\R^k$.
\end{remark}

\begin{eg}[Linearly improved BPS skyrmions] Consider the BPS skyrmion $\ph_B:\R^3\ra S^3$ constructed above in Example \ref{barebell} (equation
(\ref{sayosuco})). This has
\bea
\ph_B^*h&=&\left(f_B'(r)^2-\frac{\sin^2f_B(r)}{r^2}\right)\frac{(\xvec\cdot d\xvec)^2}{r^2}+\frac{\sin^2 f_B(r)}{r^2}d\xvec\cdot d\xvec+\nonumber \\
&&\qquad(B^2-1)\sin^2f_B(r)\left(\frac{x_1dx_2-x_2dx_1}{x_1^2+x_2^2}\right)^2
\eea
so its average strain matrix is
\bea
M&=&2\pi B^{1/3}\left\{\left(\frac23C_1+\frac43 C_2\right)\I_3+(B^2-1)C_2\left(\begin{array}{ccc}1&0&0\\0&1&0\\0&0&0\end{array}\right)\right\}
\eea
where
\beq\label{Cdef}
C_1=\int_0^\infty f'(r)^2r^2\, dr,\quad C_2=\int_0^\infty\sin^2f(r)\, dr
\eeq
and $f_B(r)=f(B^{-1/3}r)$. It follows from Proposition \ref{opti} that the minimum of $E_2$ over the
$SL(3,\R)$ orbit of $\ph_B$ is
\beq
E_2(\ph_B\circ A_B)=\frac32(\det M)^{1/3}=2\pi B^{1/3}\left(C_1+\frac{1+3B^2}{2}C_2\right)^{2/3}\left(C_1+2C_2\right)^{1/3}
\eeq
attained at
\beq\label{AB}
A_B=\left(\begin{array}{ccc}\lambda_B&0&0\\0&\lambda_B&0\\0&0&\lambda_B^{-2}\end{array}\right),\qquad
\lambda_B=\left(\frac{2C_1+4C_2}{2C_1+(3B^2+1)C_2}\right)^{1/6}.
\eeq
Note that $E_2(\ph_B\circ A_B)$ grows like $B^{5/3}$, whereas $E_2(\ph_B)=\frac12\tr M$ grows like $B^{7/3}$, so 
the energy saved by deforming $\ph_B$ along $SL(3,\R)$ grows
without bound as $B$ increases. Note also that $\lambda_1=1$ and $\lambda_B$ decreases monotonically towards $0$, so $A_B$ has the effect of spreading $\ph_B$ in the $x_1x_2$ plane while squashing it in the $x_3$ direction, and this distortion grows more extreme with larger $B$. This suggests that the approximation of
near BPS skyrmions by $\ph_B$ gets progressively worse as $B$ increases. Since $E_2(\ph_B\circ A_B)$ still grows faster than $B$, it is clear that linearly
improved BPS skyrmions are not competitive candidates to approximate near BPS skyrmions at large $B$, as they are more energetic than $B$ remote superposed charge 1
BPS skyrmions. For appropriate choices of potential $U$, they are energetically favoured over charge 1 clusters for low $B$, however.
\end{eg}

\subsection*{Appendix: proof of Lemma \ref{gsos}}
\label{appendixA}
\news
\renewcommand{\theequation}{A.\arabic{equation}}

We compute, using the definitions,
\bea
\div\iota_X\alpha&=&\sum_i\left(e_i[\alpha(X,e_i)]-\alpha(X,\nabla_{e_i}e_i)\right),\nonumber \\
(\div\alpha)(X)&=&\sum_i\left(e_i[\alpha(X,e_i)]-\alpha(\nabla_{e_i}e_i,X)-\alpha(e_i,\nabla_{e_i}X)\right)\nonumber \\
\Rightarrow\quad\div\iota_X\alpha-(\div \alpha)(X)&=&\sum_i\alpha(e_i,\nabla_{e_i}X),
\eea
and
\bea
\ip{\alpha,\L_Xg}_g&=&\sum_{i,j}\alpha(e_i,e_j)(\L_Xg)(e_i,e_j)\nonumber \\
&=&\sum_{i,j}\alpha(e_i,e_j)\left(X[g(e_i,e_j)]-g(\L_Xe_i,e_j)-g(e_i,\L_Xe_j)\right)\nonumber \\
&=&-\sum_{i,j}\alpha(e_i,e_j)\left(g(\nabla_Xe_i-\nabla_{e_i}X,e_j)+g(e_i,\nabla_Xe_j-\nabla_{e_j}X)\right)\nonumber \\
&=&-\sum_{i,j}\alpha(e_i,e_j)\left(X[g(e_i,e_j)]-g(\nabla_{e_i}X,e_j)-g(e_i,\nabla_{e_j}X)\right)\nonumber \\
&=&2\sum_{i,j}\alpha(e_i,e_j)g(e_j,\nabla_{e_i}X)\nonumber \\
&=&2\sum_i\alpha(e_i,\sum_jg(e_j,\nabla_{e_i}X)e_j)\nonumber \\
&=&2\sum_i\alpha(e_i,\nabla_{e_i}X)
\eea
which completes the proof.

\subsection*{Acknowledgements}

The author wishes to thank Christoph Adam, Andrzej Wereszczy\'nski and Nick Manton for useful discussions. This work was supported by the UK
Engineering and Physical Sciences Research Council.

\end{document}